\documentclass[12pt,english,leqno]{article}
\usepackage{mathptmx}
\usepackage[T1]{fontenc}
\usepackage[unicode=true,pdfusetitle,
bookmarks=true,bookmarksnumbered=false,bookmarksopen=false,
breaklinks=false,pdfborder={0 0 0},pdfborderstyle={},backref=false,colorlinks=true] {hyperref}
\hypersetup{
	citecolor=blue, linkcolor=blue, urlcolor=blue}
\usepackage{geometry}
\usepackage[utf8]{inputenc}
\usepackage{amsmath,mathtools,commath}
\usepackage{amsthm}
\usepackage{amssymb}
\usepackage[authoryear,round]{natbib}
\usepackage{hyperref}
\usepackage{amsfonts}
\usepackage{color}
\usepackage{array}
\usepackage{verbatim}
\usepackage{multirow}
\usepackage{graphicx}
\usepackage{setspace}
\usepackage{hyperref}
\usepackage{epigraph}
\usepackage{subfigure}
\usepackage{accents}
\usepackage{tikz}
\usepackage{tikz-cd}
\usepackage{tkz-euclide}
\usepackage[makeroom]{cancel}
\usepackage{geometry}
\usepackage{mathrsfs}
\usepackage{bm}
\usepackage{natbib}
\usepackage{appendix}

\def\b{\beta}

\def\:={\coloneqq}

\newlength{\dhatheight}

\def\b{\beta}

\def\:={\coloneqq}

\newtheorem{theorem}{Theorem}[section]
\newtheorem{proposition}{Proposition}[section]
\newtheorem{lemma}[proposition]{Lemma}
\newtheorem{corollary}[proposition]{Corollary}
\newtheorem{definition}[proposition]{Definition}
\theoremstyle{remark}
\newtheorem{remark}[proposition]{Remark}

\title{Automation, Savings, and Breakaway to a Post-Labour Economy
\thanks{We thank the organisers and participants of the 2026 Silvaplana
  Political Economy Workshop for helpful comments and discussions.
  Hellman gratefully acknowledges financial support from the Israel
  Science Foundation grant 458/25.
  The authors declare no competing interests.}}
\author{Omer Edhan and Ziv Hellman}
\date{This Draft: \today}

\begin{document}
	
	\maketitle

\begin{abstract}
When does automation become an economically self-sustaining process? We define
\emph{automation breakaway} as the condition under which saving generated by
physical capital and artificial labour is sufficient to finance continued
automation. Artificial labour is labour-like in production but capital-like in
accumulation, creating an income base that can support further investment. We
characterise breakaway and its dynamic viability, showing that productivity and
factor-income composition provide distinct financing channels. Artificial-labour
accumulation strengthens the saving base supporting breakaway, while its effect on
labour income changes sign with substitutability with human labour. Near full
automation, aggregate labour income can contract through fewer human-filled jobs
while average labour income in surviving jobs remains positive or rises. Optimal
reliance on worker finance can begin to decline while it still provides a positive
current net benefit.
\medskip

\noindent\textbf{JEL Classification:} C61, E21, E22, E25, J23, O33
\end{abstract}

\section{Introduction}
\label{sec:introduction}

When does automation become an economically self-sustaining process? Advances in
artificial intelligence and robotics increasingly allow production to be carried
out by accumulable technologies in place of human labour. 
It is reasonable to assume that this trend will continue in the future, perhaps even at an accelerated pace, defining a transitional period between economies dependence on human labour and economies in which production is primarily accomplished by autonomous machines.

Such a fundamental economic transition
will require continuing investment to proceed. During the transition period, however, part of that
investment may still need to be financed by saving that is generated from labour income, even as labour income may be reduced by artificial labour replacing human workers. Automation
therefore changes the financing base of further automation at the same time it
changes the productive role of labour. 
The upshot is that a rapid automation process can cause its own unsustainability by reducing human labour income (and hence savings) to such an extent that it undermines its financing base. 
The resulting accumulation problem is
simple: when does the automation process generate enough resources to sustain its own continued development, without reliance on saving from human labour income?

This paper studies that financing transition. Human labour plays two economic
roles along the path of automation. 
It contributes productive services, and also savings that accumulate and contribute to investment in the economy.
A process of automation may diminish labour's productive role faster than its needed role in financing accumulation.
An economy can therefore become increasingly automated while the investment programme
supporting that automation continues to draw on worker saving. Such a transition, however,
reaches a qualitatively different regime once physical capital and artificial
labour generate enough saving on their own to maintain the automation system and finance its
continued advance from their own income. At that point, automation can break free of any dependence on saving from human labour. We call this
\emph{automation breakaway}.

Breakaway turns the transition toward a post-labour economy into an accumulation
problem. Prior to breakaway, worker saving contributes to the financing base of
continued automation. After breakaway, the capital--artificial-labour accumulation
circuit has sufficient financing capacity to sustain the programme from the income
generated by accumulable factors. Automation can then finance the continued
expansion of the productive system that increases its own role in production. The
central question is when accumulation acquires this financial autonomy and whether
that autonomy persists as automation advances.

Artificial labour gives this problem its distinctive economic structure.
Artificial intelligence can increasingly perform work-like functions while
remaining reproducible, ownable, and expandable through investment. It
therefore incorporates elements of both labour and capital together: labour-like in production, and capital-like in accumulation. Recent work
on autonomous AI similarly emphasises the scalability of artificial agents across
available computational resources, in contrast with human knowledge and time,
which remain embodied in individual workers \citep{ide2025artificial}. This dual
character allows artificial labour to substitute for human labour inside
production while joining physical capital as part of the accumulable income base
that finances further investment.

We develop a tractable dynamic framework built around this distinction. Production
uses physical capital, human labour, and artificial labour. 
This receives explicit expression in the formulae of the model. Where in  standard models output $Y$ is a function of capital $K$ and human labour $L$, we add here a third major variable, $\Lambda$, representing effective artificial labour capacity and its aspects separate from traditional capital and human labour. Improvements in the quantity or quality of artificial labour have direct effects on output in our model, and are reflected in $\Lambda$.

An automation frontier
determines the share of the task domain currently performed by artificial labour, while AI
capability can also augment human productivity in tasks that remain
human-essential. Capital owners, workers, and owners of artificial labour save out
of their respective income streams. Continued automation requires investment to
maintain productive stocks, expand artificial-labour capacity, improve AI
capability, and advance the automation frontier. This structure allows the
financing capacity of automation and the economic position of human labour to
evolve jointly while remaining analytically distinct.

Our first results characterise the emergence and persistence of financial autonomy.
Breakaway is reached when saving generated by capital and artificial labour covers
the investment required to continue the automation programme. Dynamic viability
requires this financing capacity to reproduce itself as automation advances. In
the benchmark dynamics, an advance of the automation frontier must generate
sufficient additional non-labour saving to cover the accompanying increase in the
investment requirement. Breakaway therefore describes a self-sustaining
saving-investment regime, rather than a particular technological level.

The production structure reveals two distinct forces behind this financing
condition. Frontier automation can raise total output, expanding the income
available for saving. It can also change the composition of factor income, shifting
income towards artificial labour and therefore toward the income base that finances
breakaway. The output effect and the income-composition effect enter the viability
condition separately. This yields one of the paper's main results: frontier
automation can strengthen viable breakaway even when its marginal effect on total
output is negative. A sufficiently strong shift in factor income toward the
non-labour saving base can finance the additional investment requirement despite a
contemporaneous decline in output. Financial autonomy and marginal productivity
therefore respond to distinct economic margins.

Artificial-labour accumulation provides a second central result. Expanding the
artificial labour stock always expands the non-labour saving base. Its effect on
labour income, in contrast, depends on the productive relationship between human
and artificial labour. With sufficiently strong substitutability,
artificial labour accumulation reduces labour income; with sufficiently weak
substitutability, it raises labour income. The same accumulation process that
strengthens the financing base of automation can therefore support very different
paths for workers. This separates the transition to breakaway from the evolution
of labour income and shows how both margins are shaped by the productive role of
artificial labour.

We then characterise the long-run evolution of labour income. Displacement reduces
the domain of tasks for which human labour remains essential, augmentation raises
human productivity within that domain, and accumulation changes productive scale
and the contribution of the other inputs to human labour income. The cumulative balance of all of these effects determines
whether labour income contracts along a path approaching full automation, or
remains economically significant. Paths approaching full automation can therefore
retain economically significant labour income when augmentation and favourable scale
effects are sufficiently strong. Breakaway and labour-income persistence describe separate
dimensions of the transition: one concerns the financing autonomy of accumulation,
while the other concerns the income generated by human productive activity.

The task structure provides a further mapping from aggregate labour income to
human work. Jobs consist of bundles of core tasks, and a job remains human-filled
whenever its bundle retains a human-essential requirement. In the finite-mean
benchmark, frontier expansion reduces both aggregate labour income and the mass of
human-filled jobs through the same direct extensive-margin force. That common
frontier effect therefore cancels from average labour income per remaining
human-filled job. Augmentation and the surrounding productive environment then
determine how income in the jobs that remain evolves. Aggregate labour income can
consequently contract through a shrinking mass of human-filled jobs while income
associated with the surviving jobs remains stable or rises.

The final part of the paper makes the financing margin itself an optimal dynamic
choice. Worker saving is a source of finance external to the
capital--artificial-labour accumulation circuit, and the investor determines the
extent and duration of reliance on it. Worker finance supplies resources for
investment, reliance on that source carries a reduced-form external-finance wedge,
and changing the financing structure carries an adjustment cost. The financing
structure of automation therefore becomes part of optimal investment behaviour.

The optimal financing decision is forward-looking. The value of continued reliance
on worker finance depends on the future path of labour income and on the future
cost of external finance. This creates an anticipatory reduction in reliance: an
optimal programme can begin reducing its reliance on worker finance while worker
saving still provides a positive current net financing benefit. Under stronger
conditions, reliance reaches zero in finite time. The financing margin therefore remains
economically substantive inside an optimal investment problem: the programme may use
worker finance during the transition while building financial independence in
anticipation of the erosion of the income base supporting that finance.

These results add an accumulation margin to the production-based analysis of
automation. Task-based models explain how technology changes the productive role
of labour, while transformative-growth models characterise economies in which
increasingly capable AI relaxes human labour as a productive constraint. The
breakaway perspective studies the resources required to sustain movement along
that transition. It connects the task-based logic of automation with the classical
saving-investment logic of accumulation in an economy in which artificial labour
is itself an accumulable productive factor. In this sense, our analysis also
complements \citet{restrepo2025missed}, who characterises the economic structure of an AGI
limit in which economically valuable work can increasingly be performed with
compute. We characterise the financing conditions governing movement toward such
a regime.

The contribution of the paper is a theory of automation breakaway. It identifies
when automation becomes a financially self-sustaining accumulation process,
characterises the dynamic condition that preserves this regime, and shows how
output and factor-income composition jointly determine its viability. The framework
then separates the financing transition from the evolution of labour income and
human work, and shows how reliance on worker finance enters an optimal dynamic
investment problem. It thereby distinguishes three dimensions of the transition
toward a post-labour economy: the technological capacity of automation, the
financial autonomy of the accumulation process, and the continuing economic role
of human labour.

The paper proceeds as follows. Section~\ref{sec:framework} introduces the
production environment, factor incomes, saving flows, and the investment
requirements of continued automation. Section~\ref{sec:breakaway-viability}
introduces automation breakaway and characterises its dynamic viability.
Section~\ref{sec:labour-income-jobs} studies labour income, artificial-labour
accumulation, and the extensive margin of human-filled jobs.
Section~\ref{sec:optimal} studies optimal reliance on worker finance.
Section~\ref{sec:conclusion} concludes.

\section{Related Literature}
\label{sec:related_literature}

Our analysis relates most directly to three literature strains: task-based models of
automation and artificial labour, work on AI-driven growth and increasingly
machine-based production, and theories of saving, distribution, and accumulation.
These literatures provide the production and accumulation ingredients of our
framework. We organise them around a distinct dynamic object: the financing
capacity of continued automation. The central question is when the income generated
by accumulable factors provides sufficient saving to sustain the investment
required for further automation.

The task-based automation literature provides the production-side representation
of technological displacement. Early work by \citet{zeira1998workers} studies the
replacement of labour requirements by machine capital, while task-based models
make the allocation of activities between labour and machines explicit
\citep{acemoglu2011skills,acemoglu2019automation,acemoglu2018ai}. We use this
structure to represent the contraction of the human-essential task domain and carry
that reallocation into the accumulation problem. Labour income generates worker
saving, so a change in the productive role of labour also changes one source of
finance for continued automation. This gives task displacement a financing margin
alongside its effects on production, wages, and factor shares. The model also
allows AI capability to augment productivity within tasks that remain
human-essential, consistent with the broader literature on skill-complementary
technical change \citep{acemoglu2002directed,acemoglu2002technical} and recent
evidence that generative AI can raise worker productivity
\citep{brynjolfsson2025generative,noy2023experimental}. Related task-exposure
exercises quantify the range of work activities to which large language models may
apply \citep{eloundou2024gpts}.

Recent work by \citet{ide2025artificial} provides a particularly useful
production-side interpretation of artificial labour. They model autonomous AI as
scalable agents capable of performing knowledge work that could otherwise be
performed by humans and study how AI capability and available compute reshape the
organisation and allocation of knowledge work. Our framework places the
accumulation of this productive capacity at the centre of the analysis:
artificial-labour capacity is an expandable stock whose growth requires investment
and whose use generates factor income and saving. This turns the scalability of
artificial labour from a production property into an accumulation mechanism. The
reproducibility margin is also related to \citet{raymookherjee2022growth}, who show
how the asymmetry between replicable physical capital and embodied human capital
can generate progressive automation through accumulation. In our setting,
artificial labour combines labour-like productive functions with capital-like
accumulation and ownership. Its expansion therefore enlarges the non-labour saving
base, while its effect on labour income depends on the productive relationship
between human and artificial labour.

A second literature studies artificial intelligence and growth as increasingly
capable machine inputs relax human labour as a productive constraint
\citep{aghion2019ai,nordhaus2021singularity}. A particularly close growth-side
comparison is \citet{trammell2025growth}, who characterise the consequences of
transformative AI for growth and show how sufficiently extensive production
automation can make machine inputs economically self-replicating.
\citet{restrepo2025missed} characterises the productive allocation as expanding
compute makes an increasing range of economically valuable work feasible with
machines and the labour share approaches zero. Our analysis adds the financing
dimension of this transition. Breakaway identifies when the
capital--artificial-labour circuit generates enough saving to meet the investment
requirements associated with maintaining and advancing automation. Technological
self-replication and financial self-sustenance are thus distinct and complementary
features of an increasingly machine-based economy.

Related models of robots and machine capital show how automation interacts with
accumulation through wages, saving, wealth, and factor-income distribution.
\citet{sachs2012smart}, in particular, show how displacement can weaken subsequent
capital formation by reducing the income and saving capacity of workers.
\citet{benzell2015robots} and \citet{sachs2015robots} study the broader
macroeconomic and distributional effects of smart machines, while
\citet{berg2018fear} combines robot capital with heterogeneous saving behaviour and
\citet{moll2022uneven} links automation to the joint evolution of income and wealth
inequality. These papers establish several channels through which automation can
reshape accumulation. Our financing-regime margin follows the complementary
channel generated by an expanding non-labour income base: as capital and artificial
labour accumulate, the saving generated by those factors can acquire the capacity
to finance the continuation of automation from within the accumulable part of the
economy.

The saving mechanism builds on the classical literature on distribution and
accumulation. \citet{kaldor1955alternative} links aggregate saving to the
composition of wage and profit income through different saving propensities.
\citet{pasinetti1962rate} extends this framework by allowing workers who save to
accumulate capital and receive profit income, thereby making the interaction
between saving, ownership, and distribution explicit. Our framework uses
source-specific saving propensities to isolate how automation-induced changes in
factor-income composition affect financing capacity. Introducing artificial labour
adds an accumulable source of income whose importance grows with automation.
Changes in the automation frontier can then affect the financing capacity of the
economy through both total output and the composition of factor income. This
separation generates one of our central results: a shift of income toward the
non-labour saving base can strengthen the viability of breakaway even when frontier
advancement lowers output at the margin.

Recent work brings saving, ownership, and financial structure directly into models
of AI-driven accumulation. \citet{lagarda2026owns} study how ownership of AI
capital shapes household income, saving, and capital formation, while
\citet{bond2023income} study how heterogeneous financial frictions shape
distribution as automation becomes asymptotically complete. A particularly close
comparison is \citet{caballero2026speculative}. Caballero studies an equilibrium
in which AI-induced shifts toward capital income interact with wealth-dependent
saving and asset valuations and can support a self-sustaining high-capital regime.
Our breakaway criterion takes the current state and continuation programme as its
objects and asks whether saving generated within the
capital--artificial-labour circuit covers the associated continuation requirement.
Holding saving propensities by income source fixed makes the income-composition
channel transparent and permits the output and composition effects of automation
on financing capacity to be characterised separately. Dynamic viability then asks
whether this financing condition reproduces itself as the automation frontier
continues to advance.

Taken together, these literatures locate the contribution of the paper at the
financing transition created when a labour-performing technology is also an
accumulable and income-generating factor of production. Task-based models determine
how automation reallocates productive activity, AI growth models characterise the
expansion of machine-based production, and accumulation theory explains how the
resulting income streams support investment. Breakaway makes the financing capacity
of this process an explicit state-dependent object. The final part of the paper
extends the same financing margin to optimal investment behaviour by making
reliance on worker finance an intertemporal choice as the labour-income base evolves.

\section{Framework}
\label{sec:framework}

This section sets out the economic environment and the investment problem studied
in the paper.  The model has three productive inputs: physical capital, human
labour, and artificial labour.  Artificial labour is the economic form introduced by
AI.  It is labour-like in production, because it performs tasks that would otherwise
require human workers, but it is capital-like in accumulation and ownership,
because it is reproducible, accumulable, and ownable.

\subsection{Production with artificial labour}
\label{subsec:production-artificial-labour}

The economy combines three productive inputs: physical capital \(K\), human labour
\(L\), and artificial labour \(\Lambda\).  Output is produced according to
\begin{equation}
	\label{eq:general-production-three-factors}
	Y=F(K,L,\Lambda;\beta,a),
\end{equation}
where \(\beta\in[0,1]\) is the automation frontier and \(a\geq0\) is augmentation
capability.  The automation frontier measures the share of the task domain currently
automated and performed by artificial labour.  Augmentation capability raises the productivity of human labour
in tasks where human input remains essential.  We interpret \(\Lambda\) as effective
artificial-labour capacity, so improvements in the quantity or quality of artificial
labour are embodied in \(\Lambda\).

The state of the automation system is
\begin{equation}
	\label{eq:state-model}
	x=(K,\Lambda,a,\beta).
\end{equation}
These are the accumulable and technologically advancing components whose
maintenance and expansion define the investment problem.

The production structure combines two margins.  Human labour and artificial labour
are averaged into a labour mean, with weights determined by the automation
frontier.  The weight \(1-\beta\) is the human-essential task domain, while
\(\beta\) is the automated task domain.  Capital then combines with this labour
mean at the aggregate production level.  This means that artificial labour substitutes for
human labour inside the labour mean, while physical capital combines with the
labour mean as a whole.

Define the \textit{labour mean} by
\begin{equation}
	\label{eq:labour-mean}
	M \:=
	\left[
		(1-\beta)a^\theta L^\rho+\beta\Lambda^\rho
	\right]^{1/\rho}.
\end{equation}
The object \(M\) is a mean of human and artificial labour services.  The parameter
\(\rho\in(0,1)\) governs substitution between these two sources of labour services
inside \(M\).  This is distinct from the task margin governed by \(\beta\): as
\(\beta\) rises, greater weight is placed on artificial labour and less on the
human-essential task domain.  The term $a^\theta$, is a measure of 
augmentation capability: the higher $a^\theta$, the greater the effective contribution of human labour in the
tasks where human input remains essential.

Throughout this paper, we use the following benchmark production function
\begin{equation}
	\label{eq:production-labour-mean}
	Y = K^\alpha M^{1-\alpha},
\end{equation}
where \(\alpha\in(0,1)\).  Written out more fully, this is
\begin{equation}
	\label{eq:ces-production-example}
	Y = K^\alpha
	\left[
		(1-\beta)a^\theta L^\rho+\beta\Lambda^\rho
	\right]^{\frac{1-\alpha}{\rho}}.
\end{equation}

We assume competitive factor pricing throughout. Let
\(w(x)=Y_L(x)\)
denote the competitive wage per unit of human labour. Factor incomes are therefore
\[
	Y^L=wL=L Y_L,\qquad
	Y^K=K Y_K,\qquad
	Y^\Lambda=\Lambda Y_\Lambda .
\]
For the benchmark above, factor incomes are
\begin{align}
	Y^K
	&=
	\alpha Y,
	\label{eq:capital-income-production-example}
	\\
	Y^L
	&=
	(1-\alpha)Y
	\frac{(1-\beta)a^\theta L^\rho}{M^\rho},
	\label{eq:labour-income-production-example}
	\\
	Y^\Lambda
	&=
	(1-\alpha)Y
	\frac{\beta\Lambda^\rho}{M^\rho}.
	\label{eq:artificial-labour-income-production-example}
\end{align}

These expressions connect production to financing.  Worker saving is generated by
labour income \(Y^L\).  Non-labour saving is generated by capital income \(Y^K\)
and artificial-labour income \(Y^\Lambda\).  As automation advances, the labour
mean places greater and greater weight on artificial labour; simultaneously, saving flows shift toward
the accumulable automation system.  Breakaway asks when those non-labour saving
flows generate non-negative continuation surplus when worker saving is excluded
from the investment circuit.

\subsection{Saving and the Automation Investment Problem}
\label{subsec:saving-investment-problem}

Competitive factor incomes determine the saving flows available for investment.
The saving rate will be assumed to satisfy $(\sigma^j \in [0,1]$, for each income source \(j \in \{L,K,\Lambda\}\).

\emph{Aggregate saving} is
\begin{equation}
	\label{eq:aggregate-saving-model}
	\Sigma(x)
	=
	\sigma^L Y^L(x)+\sigma^K Y^K(x)+\sigma^\Lambda Y^\Lambda(x).
\end{equation}
\textit{Non-labour saving}, generated by capital and artificial labour, is
\begin{equation}
	\label{eq:non-labour-saving-model}
	\Sigma^E(x)
	=
	\sigma^K Y^K(x)+\sigma^\Lambda Y^\Lambda(x).
\end{equation}
\(\Sigma^E\) is the saving base generated by the accumulable factors $K$ and \(\Lambda\).

Sustained automation requires investment in all of the components of the automation
system: physical capital \(K\), artificial-labour capacity \(\Lambda\),
augmentation capability \(a\), and the frontier \(\beta\).  These investments
maintain productive capacity, expand artificial labour, expand augmentation
capability, and expand the task domain performed by artificial labour.

The common accounting environment supports two distinct questions regarding the role
of labour-income saving in accumulation.  The first is a financing-feasibility
question: does non-labour saving generate non-negative continuation surplus when
worker saving is excluded?  The second is an optimal-reliance question: when
current flow financing feasibility is deliberately relaxed, what is the optimal
path of automation investment and reliance on worker finance?  This distinction
separates whether worker finance is required for continuation from whether worker
finance remains optimally valuable when its use is not required for admissibility.

We first formalise the financing-feasibility question.  The optimal-reliance
question will be studied later, in Section \ref{sec:optimal}.  Let
\(I = (I^K,I^\Lambda,I^a,I^\beta)\) denote investment flows, and let the automation
state evolve according to the differential equation
\begin{equation}
	\label{eq:general-dynamics-model}
	\dot{x} = G(x,I).
\end{equation}
The continuation requirement for the frontier is expressed by a state-dependent
rate \(\eta^\beta(x)>0\).  Conditional on continuation of the automation programme,
\(\eta^\beta(x)\) is the minimum frontier advance that defines continuation.  We
regard this requirement as a reduced-form primitive.  One possible interpretation is
that it reflects investment opportunities outside the automation programme, mapped
into the frontier advance required for continuation.

For a given target frontier advance \(\eta^\beta(x)>0\), define the full investment
requirement by
\begin{equation}
	\label{eq:tfull-general-model}
	\mathcal{T}^{\mathrm{full}}(x)
	=
	\inf_{I\geq0}
	\left\{
		C(I;x) \mid
		G^K(x,I)\geq0,\;
		G^\Lambda(x,I)\geq0,\;
		G^a(x,I)\geq0,\;
		G^\beta(x,I)\geq\eta^\beta(x)
	\right\}.
\end{equation}
$\mathcal{T}^{\mathrm{full}}(x)$ is the minimum investment cost required
to maintain the automation system and advance the automation frontier.

The automation frontier \(\beta\) and the breakaway surplus express
the two forms of labour dependence introduced above.  The frontier concerns
production, while the breakaway surplus concerns accumulation.  An economy can therefore
approach a technologically post-labour production structure while still requiring
labour-income saving to finance continued automation.  Conversely, breakaway need
not require a nearly complete automation frontier: \(B(x) \geq 0\) may hold while at the same time
$\beta < 1$, hence accumulation may become independent of worker finance while
production still retains a human-essential task domain.

We define the \emph{breakaway surplus} to be
\begin{equation}
	\label{eq:breakaway-surplus-model}
	B(x)
	=
	\Sigma^E(x)-\mathcal{T}^{\mathrm{full}}(x).
\end{equation}
The object \(B\) is an investment surplus.  It measures the net continuation
surplus from carrying out the required automation investment using non-labour
saving alone.

\textit{Breakaway} occurs when
\begin{equation}
	\label{eq:breakaway-condition-model}
	B(x)\geq0.
\end{equation}
At such a state, the capital--artificial-labour subsystem generates enough saving
to maintain and expand automation without a contribution from worker saving.

The next section studies whether this condition is dynamically persistent:
breakaway and viability ask whether the non-labour saving-investment circuit can remain
closed over time.  Section~\ref{sec:labour-income-jobs} then studies what happens
to labour income and human-filled jobs as the economy approaches full automation.
Section~\ref{sec:optimal} uses that labour-income margin in a deliberately
permissive financing benchmark to study the distinct optimality question: whether
and how quickly reliance on worker finance should be reduced.

\section{Breakaway and Viability}
\label{sec:breakaway-viability}

This section studies the investment surplus question introduced in the framework.
The breakaway surplus \(B(x)\) measures non-labour saving net of the full investment
requirement of automation when worker saving is removed from the investment
circuit.  

\subsection{Breakaway}
\label{subsec:breakaway}

For \(m\geq0\), define
\begin{equation}
	\label{eq:Bm}
	B_m=\{x:B(x)\geq m\}.
\end{equation}
The set \(B_0\) is the breakaway region.  More generally, \(B_m\) is the region in
which the capital--artificial-labour subsystem generates a surplus of at least
\(m\) after covering the full investment requirement of automation.


Note that breakaway is a pointwise investment surplus condition.  Viability, in contrast, asks whether this
condition is preserved as the economy evolves.  To study persistence, fix an induced law of
motion
\(\dot{x}=g(x)\).
The vector field \(g\) may arise from a decentralised investment process, a policy
rule, or an investment rule \(I(x)\), via $g(x) = G(x,I(x))$.

\begin{definition}
	\label{def:viability}
	The set \(B_m\) is \emph{viable} under the dynamics \(\dot{x} = g(x)\) if every trajectory
	with \(x(0) \in B_m\) satisfies \(x(t) \in B_m\) for all \(t \geq 0\).
\end{definition}

\begin{theorem}
	\label{thm:viable-breakaway}
	Assume that \(g\) is Lipschitz, that \(B\) is continuously differentiable, and
	that \(m\) is a regular value of \(B\).  Then \(B_m\) is viable if and only if
	\begin{equation}
		\label{eq:nagumo-condition}
		\nabla B(x)\cdot g(x)\geq0
	\end{equation}
	for every state \(x\) satisfying \(B(x)=m\).
\end{theorem}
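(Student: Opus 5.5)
The plan is to recognise the statement as an instance of Nagumo's invariance theorem for closed sets, specialised to the sublevel-type set $B_m=\{x:B(x)\geq m\}$. Two ingredients are needed. The first is a description of the Bouligand (contingent) tangent cone of $B_m$. The second is the standard equivalence, valid when $g$ is Lipschitz so that trajectories are unique: a closed set $K$ is forward invariant if and only if $g(x)\in T_K(x)$ for every $x\in K$. Continuity of $B$ makes $B_m$ closed. Throughout, viability is read on the maximal interval of existence of each trajectory, since Definition~\ref{def:viability} implicitly presumes that solutions exist for all $t\geq0$.

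\textbf{Step 1: the tangent cone.} At interior points, where $B(x)>m$, the tangent cone is the whole space, so the condition is vacuous there. At a boundary point, where $B(x)=m$, regularity of the value $m$ gives $\nabla B(x)\neq0$. I will then show
\[
T_{B_m}(x)=\{v:\nabla B(x)\cdot v\geq0\}.
\]

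For the inclusion $\subseteq$, take $x+h_nv_n\in B_m$ with $h_n\downarrow0$ and $v_n\to v$. A first-order Taylor expansion gives
\[
m\leq B(x+h_nv_n)=m+h_n\nabla B(x)\cdot v_n+o(h_n),
\]
so $\nabla B(x)\cdot v\geq0$.

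For the inclusion $\supseteq$, first suppose $\nabla B(x)\cdot v>0$. The same expansion shows that $x+hv\in B_m$ for all small $h>0$. The closed half-space is the closure of the open half-space, because $\nabla B(x)\neq0$. The tangent cone is closed, so the inclusion extends to vectors with $\nabla B(x)\cdot v=0$. Combining the two cases, the tangency condition $g(x)\in T_{B_m}(x)$ at all points of $B_m$ is exactly condition \eqref{eq:nagumo-condition} on $\{B=m\}$.

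\textbf{Step 2: necessity.} This direction is elementary and needs no invariance theorem. Suppose $B(x_0)=m$ and $\nabla B(x_0)\cdot g(x_0)<0$. Let $x(t)$ be the trajectory from $x_0$. Since $t\mapsto B(x(t))$ is $C^1$, we have
\[
\tfrac{d}{dt}B(x(t))\big|_{t=0}=\nabla B(x_0)\cdot g(x_0)<0,
\]
so $B(x(t))<m$ for small $t>0$. This contradicts viability.

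\textbf{Step 3: sufficiency.} This is the main obstacle, because the boundary condition is only weak: $\nabla B\cdot g$ may vanish on the boundary. A naive argument of the form ``$B$ is nondecreasing when it touches $m$'' therefore fails. I would run the standard distance-function argument behind Nagumo's theorem. Set
\[
d(t)=\operatorname{dist}(x(t),B_m),
\]
and suppose $d(t_1)>0$. Take the last time $t_0<t_1$ with $d(t_0)=0$, and work on a compact neighbourhood of the trajectory segment. For $t$ in this window, pick a nearest point $z(t)\in B_m$. Using $g(z)\in T_{B_m}(z)$ together with the Lipschitz constant $L$ of $g$, estimate
\[
d(t+h)\leq |x(t)+hg(x(t))-z-hg(z)|+\operatorname{dist}(z+hg(z),B_m)+o(h)\leq d(t)+hL\,d(t)+o(h).
\]
This yields $D^+d\leq L\,d$, and Gronwall's inequality with $d(t_0)=0$ forces $d\equiv0$, a contradiction.

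The delicate point is that the $o(h)$ term must be uniform in $z$. Tangency alone gives this only pointwise. Here the $C^1$ structure helps. Near the boundary, the continuity of $\nabla B$ and the bound $|\nabla B|\geq c>0$ on a compact piece of $\{B=m\}$ let me correct $z+hg(z)$ back into $B_m$ by a move along $\nabla B(z)$ of size $o(h)$ uniformly, using a uniform mean-value estimate. If this uniformity proves awkward, the fallback is to cite the Nagumo/Bony--Brezis theorem directly once Step 1 has identified the tangent cone.
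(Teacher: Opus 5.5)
Your proposal is correct and follows the paper's route. You identify $T_{B_m}(x)$ on $\{B=m\}$ as the half-space $\{v:\nabla B(x)\cdot v\geq0\}$, which the paper asserts and you actually verify, and then rely on Nagumo's theorem; your elementary necessity argument and distance-function sketch simply unpack the result the paper cites. One small remark on Step 3: the uniformity in $z$ you worry about is not needed, because the bound $D^+d(t)\leq L\,d(t)$ is computed at each fixed $t$ with the nearest point $z(t)$ held fixed, so pointwise tangency suffices (here even with a full limit, via your $C^1$ correction along $\nabla B(z)$).
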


\begin{proof}
	Since \(B_m=\{x:B(x)\geq m\}\), its boundary is
	\(\partial B_m=\{x:B(x)=m\}\).  Since \(m\) is a regular value of \(B\), the
	boundary is a smooth hypersurface.  At each boundary point, the tangent cone is
	\(T_{B_m}(x)=\{v:\nabla B(x)\cdot v\geq0\}\).
	By Nagumo's theorem \citep{nagumo1942,aubin1984}, \(B_m\) is viable under \(\dot{x}=g(x)\) if and only if
	\(g(x)\in T_{B_m}(x)\) for every \(x\in\partial B_m\).  Using the expression for
	the tangent cone, this is equivalent to
	\(\nabla B(x)\cdot g(x)\geq0\)
	for every state satisfying \(B(x)=m\).
\end{proof}

Theorem \ref{thm:viable-breakaway} is the dynamic version of the breakaway condition.  At the boundary of
the breakaway region, the induced motion of the economy must not reduce the
breakaway surplus.  Viability is therefore the persistence condition for a
non-labour saving-investment circuit that generates non-negative continuation
surplus without worker saving.

\subsection{A Linear Benchmark}
\label{subsec:linear-benchmark}

We next consider a linear benchmark problem, in which the cost is linear, with the unit
price of each investment flow normalised to one. The investment cost is
\(C(I;x) = I^K + I^\Lambda + I^a + I^\beta\).
The state dynamics are
\begin{equation}
	\label{eq:linear-dynamics}
	\dot K=I^K-\delta^K K,
	\qquad
	\dot\Lambda=I^\Lambda-\delta^\Lambda\Lambda,
	\qquad
	\dot a=I^a-\delta^a a,
	\qquad
	\dot\beta=\chi I^\beta(1-\beta).
\end{equation}

As $\beta$ measures implemented automation coverage and frontier investment
is non-negative, \(I^\beta \geq 0\) implies that \(\dot\beta \geq 0 \) in this benchmark.
Deployed automation is therefore irreversible along the investment path:
frontier investment can expand the automated task domain but does not undo past
deployment.

In this benchmark, the state dependence of \(\eta^\beta(x)\) enables the continuation
requirement to change as the frontier advances.  In particular, \(\eta^\beta(x)\)
may decline, and may tend to zero as \(\beta\to1\).  One reduced form interpretation
is that the set of investment opportunities outside the automation programme
shrinks as automation expands, lowering the frontier advance required for
continuation.  Hence the investment required to advance the frontier need not
diverge as full automation is approached: its limiting behaviour depends on the
rate at which \(\eta^\beta(x)\) declines relative to \(1-\beta\).

For a target frontier advance \(\eta^\beta(x)>0\), the required frontier investment is
\begin{equation}
	\label{eq:Ibeta-linear}
	I^\beta(x)=\frac{\eta^\beta(x)}{\chi(1-\beta)}.
\end{equation}
The full investment requirement is therefore
\begin{equation}
	\label{eq:tfull-linear}
	\mathcal{T}^{\mathrm{full}}(x)
	=
	\delta^K K + \delta^\Lambda\Lambda + \delta^a a + I^\beta(x).
\end{equation}

Now consider the maintained stocks path: investment keeps \(K,\Lambda,a\) constant
and advances the automation frontier at the target rate \(\eta^\beta(x)\). Along
this path,
\(\dot K=\dot\Lambda=\dot a=0\) and \(\dot\beta=\eta^\beta(x)>0\).
The viability condition at a boundary state \(B(x)=m\) becomes
\(B_\beta(x)\eta^\beta(x)\geq0\).
Since \(\eta^\beta(x)>0\), this is equivalent to
\begin{equation}
	\label{eq:beta-viability-linear}
	B_\beta(x)\geq0.
\end{equation}
Using
\[
	B(x)
	=
	\Sigma^E(x)
	-
	\delta^K K-\delta^\Lambda\Lambda-\delta^a a-I^\beta(x),
\]
we obtain
\begin{equation}
	\label{eq:beta-viability-linear-expanded}
	\Sigma^E_\beta(x)-I^\beta_\beta(x)\geq0.
\end{equation}

This is the marginal viability condition in the maintained-stocks benchmark.  The
automation frontier must generate sufficient additional non-labour saving to cover the
marginal increase in the cost of advancing the frontier.  In this benchmark, the
frontier advance must carry its own financial weight.

\subsection{Marginal Non-Labour Saving in the Production Benchmark}
\label{subsec:marginal-saving-production-example}

The production benchmark makes explicit how frontier advancement affects the
marginal non-labour-saving term in the viability condition.  Recall that
\[
	M
	=
	\left[
		(1-\beta)a^\theta L^\rho+\beta\Lambda^\rho
	\right]^{1/\rho},
	\qquad
	Y=K^\alpha M^{1-\alpha},
\]
with \(\rho\in(0,1)\).  The frontier \(\beta\) shifts weight inside the labour mean
from effective human labour, \(a^{\theta/\rho}L\), towards artificial labour,
\(\Lambda\).

Define the artificial labour share inside the labour mean by
\begin{equation}
	\label{eq:artificial-labour-share}
	q = \frac{\beta\Lambda^\rho}{M^\rho}.
\end{equation}
Competitive factor payments imply
\begin{equation}
	\label{eq:factor-incomes-production-example}
	Y^K=\alpha Y,
	\qquad
	Y^\Lambda=(1-\alpha)qY.
\end{equation}
Hence non-labour saving is
\begin{equation}
	\label{eq:non-labour-saving-production-example}
	\Sigma^E
	=
	\left[
		\sigma^K\alpha+\sigma^\Lambda(1-\alpha)q
	\right]Y.
\end{equation}

Differentiating with respect to \(\beta\), holding \(K,L,\Lambda,a\) fixed, yields
the decomposition
\begin{equation}
	\label{eq:marginal-saving-decomposition}
	\Sigma^E_\beta
	=
	\underbrace{
		\left[
			\sigma^K\alpha+\sigma^\Lambda(1-\alpha)q
		\right]Y_\beta
	}_{\text{output/productivity effect}}
	+
	\underbrace{
		\sigma^\Lambda(1-\alpha)Yq_\beta
	}_{\text{income-composition effect}}.
\end{equation}

The first term captures the effect of frontier advancement on total output.  For
the benchmark production function,
\[
	\frac{Y_\beta}{Y}
	=
	\frac{1-\alpha}{\rho}
	\frac{\Lambda^\rho-a^\theta L^\rho}{M^\rho}.
\]
Since \(\rho\in(0,1)\),
\begin{equation}
	\label{eq:output-effect-threshold}
	Y_\beta>0
	\quad\Longleftrightarrow\quad
	\Lambda>a^{\theta/\rho}L.
\end{equation}
Hence frontier advancement raises output only when artificial labour is sufficiently
productive relative to effective human labour in the remaining human-essential
task domain.  Below this threshold, marginal automation lowers output.

The second term captures the induced shift of income toward artificial labour.  For
the benchmark production function,
\begin{equation}
	\label{eq:q-beta}
	q_\beta
	=
	\frac{\Lambda^\rho a^\theta L^\rho}{M^{2\rho}}
	=
	\frac{q(1-q)}{\beta(1-\beta)}.
\end{equation}
Using the expression for labour income,
\[
	Y^L
	=
	(1-\alpha)Y
	\frac{(1-\beta)a^\theta L^\rho}{M^\rho},
\]
the income-composition effect can be written as
\begin{equation}
	\label{eq:composition-effect-labour-income-density}
	\sigma^\Lambda(1-\alpha)Yq_\beta
	=
	\sigma^\Lambda
	\frac{q}{\beta}
	\frac{Y^L}{1-\beta}.
\end{equation}
Thus the composition effect is tied to labour income per unit of remaining
human-essential task margin, not to aggregate labour income itself.

Substituting the above into Equation \eqref{eq:marginal-saving-decomposition} yields
\begin{equation}
	\label{eq:marginal-saving-beta-explicit}
	\Sigma^E_\beta
	=
	Y\left[
	\left(\sigma^K\alpha+\sigma^\Lambda(1-\alpha)q\right)
	\frac{1-\alpha}{\rho}
	\frac{\Lambda^\rho-a^\theta L^\rho}{M^\rho}
	+
	\sigma^\Lambda(1-\alpha)
	\frac{\Lambda^\rho a^\theta L^\rho}{M^{2\rho}}
	\right].
\end{equation}

The viability condition in the maintained stocks linear benchmark can therefore be
written as
\begin{equation}
	\label{eq:viability-decomposition}
	\underbrace{
		\left[
			\sigma^K\alpha+\sigma^\Lambda(1-\alpha)q
		\right]Y_\beta
	}_{\text{output/productivity effect}}
	+
	\underbrace{
		\sigma^\Lambda(1-\alpha)Yq_\beta
	}_{\text{income-composition effect}}
	\geq
	\underbrace{
		I^\beta_\beta(x)
	}_{\text{marginal frontier cost}}.
\end{equation}
The second term is a pure income-source composition effect.  It does not require
wealth dependent saving or different saving rates between physical capital and
artificial labour.  It arises because frontier advancement shifts factor income
toward artificial labour, which belongs to the accumulable bloc.

Equivalently,
\begin{equation}
	\label{eq:linear-viability-production-example}
	Y\left[
	\left(\sigma^K\alpha+\sigma^\Lambda(1-\alpha)q\right)
	\frac{1-\alpha}{\rho}
	\frac{\Lambda^\rho-a^\theta L^\rho}{M^\rho}
	+
	\sigma^\Lambda(1-\alpha)
	\frac{\Lambda^\rho a^\theta L^\rho}{M^{2\rho}}
	\right]
	\geq
	I^\beta_\beta(x).
\end{equation}

\begin{corollary}
\label{cor:breakaway-negative-productivity}
Consider a boundary state \(B(x)=m\) in the maintained-stocks linear benchmark and
suppose that \(Y_\beta(x)<0\).  The maintained-stocks path satisfies the viability
condition at \(x\) if and only if
\begin{equation}
\label{eq:negative-productivity-breakaway}
\sigma^\Lambda(1-\alpha)Yq_\beta
\geq
I^\beta_\beta(x)
+
\left[\sigma^K\alpha+\sigma^\Lambda(1-\alpha)q\right](-Y_\beta).
\end{equation}
Thus frontier automation can support viable breakaway even when it lowers output:
the income-composition effect must cover both the marginal frontier cost and the
associated loss of non-labour saving through lower output.
\end{corollary}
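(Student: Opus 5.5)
The plan is to reduce the claim to the marginal viability condition already derived for the maintained-stocks path, and then rearrange it using the sign information \(Y_\beta<0\).

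First, apply Theorem~\ref{thm:viable-breakaway} at the boundary state \(x\) with \(B(x)=m\). Along the maintained-stocks path we have \(g(x)=(0,0,0,\eta^\beta(x))\), so the Nagumo condition \(\nabla B(x)\cdot g(x)\geq0\) becomes \(B_\beta(x)\eta^\beta(x)\geq0\). Because \(\eta^\beta(x)>0\), this is equivalent to \eqref{eq:beta-viability-linear}. The capital, artificial-labour and augmentation maintenance terms \(\delta^K K+\delta^\Lambda\Lambda+\delta^a a\) do not depend on \(\beta\). Hence \(B_\beta=\Sigma^E_\beta-I^\beta_\beta\), which gives \eqref{eq:beta-viability-linear-expanded}. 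At this point I only need \(\eta^\beta\) to be differentiable, so that \(I^\beta_\beta\) exists; this is implicit in the assumption that \(B\) is \(C^1\).

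Second, substitute the decomposition \eqref{eq:marginal-saving-decomposition} of \(\Sigma^E_\beta\) into the output/productivity term and the income-composition term. This gives \eqref{eq:viability-decomposition}. Since \(Y_\beta<0\), write the output term as \(-\left[\sigma^K\alpha+\sigma^\Lambda(1-\alpha)q\right](-Y_\beta)\) and move it to the right-hand side. Each step is an equivalence: the rearrangement is additive, and no division by a quantity of uncertain sign occurs. So the viability condition holds if and only if \eqref{eq:negative-productivity-breakaway} holds.

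Third, interpret the result. The bracket \(\sigma^K\alpha+\sigma^\Lambda(1-\alpha)q\) is nonnegative, and \(-Y_\beta>0\). So the added term on the right is a nonnegative loss of non-labour saving through lower output. The left-hand side is the composition effect, which is strictly positive by \eqref{eq:q-beta} whenever \(\sigma^\Lambda>0\) and \(a,L,\Lambda>0\). To justify the phrase ``can support'', I would remark that the inequality is satisfiable. By \eqref{eq:output-effect-threshold}, \(Y_\beta<0\) forces \(\Lambda<a^{\theta/\rho}L\), and the composition term stays positive in that region. So a sufficiently small \(I^\beta_\beta\) (for instance \(\eta^\beta\) declining fast enough as \(\beta\) rises) together with a small enough output loss makes \eqref{eq:negative-productivity-breakaway} hold. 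The main obstacle is not the algebra, which is a rearrangement. It is making sure the equivalence with viability is legitimate: \(m\) must be a regular value, so that Theorem~\ref{thm:viable-breakaway} applies, and the pointwise condition must be read as required at this boundary state, consistent with how the corollary is phrased.
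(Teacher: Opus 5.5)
Your proposal is correct and follows the paper's own route. The paper proves the corollary simply by rearranging the maintained-stocks viability decomposition \eqref{eq:viability-decomposition}, which it derives beforehand from the Nagumo condition and the decomposition of \(\Sigma^E_\beta\), just as in your first two steps; your satisfiability remark justifying ``can support'' goes beyond what the paper argues and is sound (the equivalence itself holds for any sign of \(Y_\beta\), which only drives the interpretation).
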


\begin{proof}
With \(Y_\beta<0\), rearranging Equation~\eqref{eq:viability-decomposition} gives
Equation~\eqref{eq:negative-productivity-breakaway}.
\end{proof}

In general,
\begin{equation}
	\label{eq:Ibeta-beta-general}
	I^\beta_\beta(x)
	=
	\frac{\eta^\beta_\beta(x)}{\chi(1-\beta)}
	+
	\frac{\eta^\beta(x)}{\chi(1-\beta)^2}.
\end{equation}
If \(\eta^\beta\) is locally independent of \(\beta\), then
\begin{equation}
	\label{eq:Ibeta-beta-constant-eta}
	I^\beta_\beta(x)
	=
	\frac{\eta^\beta}{\chi(1-\beta)^2}.
\end{equation}

This is an important point: viable breakaway is an investment condition, not a productivity condition.  In
the production benchmark, marginal automation supports breakaway when the
output/productivity and income-composition effects generate enough additional
non-labour saving to cover the marginal cost of advancing the frontier.  This can
happen through higher output, through a shift of income toward artificial labour,
or through both.  The central object is therefore marginal non-labour saving
relative to the marginal cost of continuation.

\section{Labour Income, Tasks, and Jobs}
\label{sec:labour-income-jobs}

Breakaway is an investment-surplus condition: it identifies when accumulation no
longer requires worker saving.  Labour income matters to this question for two
reasons.  It measures the economic return to human work, but before breakaway it
also generates the worker-saving flow \(\sigma^L Y^L\) on which continued
accumulation may still depend.  The labour-side consequences of automation
therefore also have a financing dimension: they determine the fate of one of the
economy's financing sources.  This section studies paths along which the automation
frontier approaches full coverage, \(\beta(t)\to1\), whatever the reason for that
convergence.

We first ask what happens to labour income along such paths.  The key issue is the
cumulative balance among the shrinking human-essential task margin, \(1-\beta(t)\),
the augmentation of remaining human labour through augmentation capability
\(a(t)\), and productive-environment effects generated by the evolution of the
other inputs.
We then extend the analysis from tasks to jobs.  Since jobs are bundles of tasks,
the same automation frontier can imply different employment outcomes depending
on how core human requirements are distributed across jobs.

\subsection{Labour Income Near Full Automation}
\label{subsec:labour-income-near-full-automation}

Consider a full-automation path, namely a trajectory \(x(t)\) along which
\(\beta(t)\equiv\beta(x(t))\to1\).  The question is what happens to labour income
as the human-essential task margin disappears.  The production benchmark provides
a useful foundation for the relevant asymptotic form.  Recall that in our leading
example
\[
	M
	=
	\left[
		(1-\beta)a^\theta L^\rho+\beta\Lambda^\rho
	\right]^{1/\rho},
	\qquad
	Y=K^\alpha M^{1-\alpha}.
\]
This combines a Cobb--Douglas production function in physical capital and the
labour mean with a constant-elasticity mean of human and artificial labour
services.  The specification preserves the standard capital--labour structure
while allowing the labour input itself to shift from human labour toward
artificial labour as the automation frontier advances.

For this benchmark, labour income is
\[
	Y^L
	=
	(1-\alpha)Y
	\frac{(1-\beta)a^\theta L^\rho}{M^\rho}.
\]
Along displacement dominated full-automation paths in the benchmark, and more
generally whenever the artificial labour term dominates inside the labour mean,
\(M^\rho\sim\Lambda^\rho\).  Hence
\begin{equation}
	\label{eq:labour-income-benchmark-asymptotic}
	Y^L(x(t))
	\sim
	(1-\beta(t))a(t)^\theta
	(1-\alpha)K(t)^\alpha L(t)^\rho
	\Lambda(t)^{1-\alpha-\rho}.
\end{equation}
This motivates the more general asymptotic representation
\begin{equation}
	\label{eq:labour-income-asymptotic}
	Y^L(x(t))
	\sim
	(1-\beta(t))a(t)^\theta h(K(t),L(t),\Lambda(t)).
\end{equation}
The term \(1-\beta(t)\) captures displacement: the contraction of the
human-essential task domain.  The term \(a(t)^\theta\) captures augmentation: the
effect of augmentation capability on human labour in that domain.

Define the human-labour margin
\begin{equation}
\label{eq:human-labour-margin}
H(t)=(1-\beta(t))a(t)^\theta .
\end{equation}

The term \(H(t)\) isolates the direct balance between displacement, through the
contraction of the human-essential task margin \(1-\beta(t)\), and augmentation,
through growth in augmentation capability \(a(t)\). The remaining term
\(h(K(t),L(t),\Lambda(t))\) captures the effects of the surrounding productive
environment.

When \(a(t)>0\), \(h(K(t),L(t),\Lambda(t))>0\), and the path is differentiable,
define
\[
\rho_\beta(t)
=
-\frac{d}{dt}\log(1-\beta(t)),
\qquad
\rho_a(t)
=
\frac{d}{dt}\log a(t),
\]
and
\[
\rho_h(t)
=
\frac{d}{dt}
\log h(K(t),L(t),\Lambda(t)).
\]

\begin{proposition}
\label{prop:labour-income-collapse-persistence}
Consider a full-automation path \(x(t)\) with \(\beta(t)\to1\). Suppose that
labour income satisfies Equation~\eqref{eq:labour-income-asymptotic}, that
\(a(t)>0\) and \(h(K(t),L(t),\Lambda(t))>0\) along the path, and that
the path is differentiable. Then labour income collapses if and only if
\begin{equation}
\label{eq:labour-income-collapse-growth}
\int_0^t
\left[
\rho_\beta(s)
-\theta\rho_a(s)
-\rho_h(s)
\right]ds
\to+\infty.
\end{equation}
Moreover, labour income is eventually bounded away from zero, in the sense that
\(\liminf_{t\to\infty}Y^L(x(t))>0\), if and only if
\begin{equation}
\label{eq:labour-income-persistence-growth}
\limsup_{t\to\infty}
\int_0^t
\left[
\rho_\beta(s)
-\theta\rho_a(s)
-\rho_h(s)
\right]ds
<+\infty.
\end{equation}
\end{proposition}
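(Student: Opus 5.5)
The plan is to reduce both claims to a statement about the logarithm of the comparison function \(Z(t)=(1-\beta(t))a(t)^\theta h(K(t),L(t),\Lambda(t))\), which the asymptotic relation \eqref{eq:labour-income-asymptotic} ties to \(Y^L(x(t))\). I interpret \(\sim\) as \(Y^L(x(t))/Z(t)\to1\). Then there is a \(T\) such that \(\tfrac12 Z(t)\le Y^L(x(t))\le 2Z(t)\) for \(t\ge T\). Consequently \(Y^L\to0\) if and only if \(Z\to0\), and \(\liminf Y^L>0\) if and only if \(\liminf Z>0\). In this step I will also state the reading of ``collapses'' as \(Y^L(x(t))\to0\).

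Next I will compute \(\log Z\) as an integral. Since \(\beta(t)<1\) along the path (needed for \(\rho_\beta\) to be defined), \(a>0\), \(h>0\), and the path is differentiable, each of \(\log(1-\beta)\), \(\log a\) and \(\log h\) is differentiable. By the definitions of \(\rho_\beta,\rho_a,\rho_h\) and the fundamental theorem of calculus,
\[
\log Z(t)-\log Z(0)=-\int_0^t\left[\rho_\beta(s)-\theta\rho_a(s)-\rho_h(s)\right]ds \;=:\; -J(t).
\]
Here I assume the derivatives are locally integrable, for instance continuous; differentiability alone suffices if one reads the integral as \(\log Z(0)-\log Z(t)\) directly. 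Because \(\log Z(0)\) is a finite constant, \(Z(t)=Z(0)e^{-J(t)}\).

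Both equivalences then follow from properties of the exponential. First, \(Z\to0\) if and only if \(J(t)\to+\infty\), which gives \eqref{eq:labour-income-collapse-growth}. Second, \(\liminf Z>0\) if and only if \(J\) is bounded above eventually, that is \(\limsup J<+\infty\), which gives \eqref{eq:labour-income-persistence-growth}. For the second equivalence, note that \(J\) is continuous, so on \([0,T]\) it is bounded. Hence ``eventually bounded above'' and \(\limsup_{t\to\infty}J(t)<+\infty\) coincide.

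The main obstacle is not computational. It is pinning down the meaning of \(\sim\) so that the transfer from \(Z\) to \(Y^L\) in the first step is legitimate. If \(\sim\) were read only as \(\log Y^L-\log Z\to0\) or as a two-sided bound, the argument would go through unchanged. If it were weaker, such as \(o\) of the logarithm, it would not. I will therefore fix the ratio interpretation at the start of the proof, note that any version giving eventual two-sided constant bounds suffices, and record that the two conditions are not complementary: the integral may oscillate without limit, in which case labour income neither collapses nor stays bounded away from zero.
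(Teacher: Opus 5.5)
Your proposal is correct and takes essentially the same route as the paper's proof. Both derive eventual two-sided constant bounds from the ratio reading of the asymptotic relation, then combine them with the exponential representation $(1-\beta)a^\theta h = H(0)h(0)\exp\{-\int_0^t[\rho_\beta-\theta\rho_a-\rho_h]\,ds\}$. Your added remarks on local integrability of the growth rates and on the two conditions not being complementary are sound; the paper leaves them implicit.
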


\begin{proof}[Proof sketch]
\renewcommand{\qedsymbol}{}
Equation \eqref{eq:labour-income-asymptotic} makes labour income asymptotically proportional to the product of the human-labour margin and the productive-environment term. Writing this product in exponential form makes its evolution depend on the cumulative gap between displacement, augmentation, and the productive environment. The two equivalences then follow from the limiting behaviour of that cumulative gap. A full proof is given in Appendix~\ref{app:proof-labour-income-collapse}.
\end{proof}

Proposition \ref{prop:labour-income-collapse-persistence} shows that labour income is governed by the cumulative balance
between three forces: displacement, augmentation, and changes in the surrounding
productive environment.
The contraction of the human-essential task margin contributes \(\rho_\beta\)
toward labour income collapse, while augmentation contributes \(\theta\rho_a\)
and changes in the remaining productive environment contribute \(\rho_h\) toward
labour income persistence.

In the production benchmark, where
\[h(K,L,\Lambda)=(1-\alpha)K^\alpha L^\rho\Lambda^{1-\alpha-\rho}\]
the productive-environment term has growth rate
\[
\rho_h(t)
=
\alpha\rho_K(t)
+
\rho\rho_L(t)
+
(1-\alpha-\rho)\rho_\Lambda(t),
\]
where
\[
\rho_K(t)=\frac{\dot K(t)}{K(t)},
\qquad
\rho_L(t)=\frac{\dot L(t)}{L(t)},
\qquad
\rho_\Lambda(t)=\frac{\dot\Lambda(t)}{\Lambda(t)}.
\]
If human-labour supply \(L\) is constant along the path, this reduces to
\(\rho_h(t)=\alpha\rho_K(t)+(1-\alpha-\rho)\rho_\Lambda(t)\).
Hence, in this benchmark, labour income collapses if and only if
\[
\int_0^t
\left[
\rho_\beta(s)
-\theta\rho_a(s)
-\alpha\rho_K(s)
-(1-\alpha-\rho)\rho_\Lambda(s)
\right]ds
\to+\infty.
\]

Frontier expansion therefore need not eliminate labour income even when
\(\beta(t)\to1\): sufficiently strong augmentation or capital deepening can
offset the shrinking human task margin, while the effect of artificial-labour
accumulation depends on \(1-\alpha-\rho\), as shown below.

The proposition identifies the aggregate labour-income margin relevant for the
investment-surplus problem. Worker saving is generated by labour income:
\(\Sigma(x)-\Sigma^E(x)=\sigma^L Y^L(x)\).
Hence, along full-automation paths satisfying
Equation~\eqref{eq:labour-income-collapse-growth},
\(\Sigma(x(t))-\Sigma^E(x(t))\to0\).
The worker-saving margin then disappears endogenously. The decomposition of
labour-income collapse into job and intensive margins requires a mapping from
the automation frontier to human-filled jobs. We introduce that mapping next.

\subsection{Tasks, Jobs, and Labour Income}
\label{subsec:tasks-jobs-labour-income}

The automation frontier is a task-level object, while jobs bundle tasks at a
higher level.  To connect the two, let \(\tau\in\{1,2,\ldots\}\) denote the
number of core tasks in a job: these are the tasks that must all be automated for
the job to be fully automated.  Let \(p_\tau\) denote the
share of jobs with exactly \(\tau\) core tasks, with
\(\sum_{\tau\geq1}p_\tau=1\).
A job remains human-filled whenever at least one of its core tasks remains outside
the automated task domain.  In the benchmark case in which core tasks are distributed
independently over the task space, a job with \(\tau\) core tasks is fully automated
with probability \(\beta^\tau\).  Hence the share of jobs that remain human-filled is
\begin{equation}
	\label{eq:human-filled-jobs}
	J^H(\beta)
	=
	\sum_{\tau\geq1}p_\tau(1-\beta^\tau).
\end{equation}

\begin{proposition}
	\label{prop:task-automation-jobs}
	For any distribution \(\{p_\tau\}_{\tau\geq1}\),
	\(J^H(\beta)\to0\) as \(\beta\to1\).
	If
	\(\bar\tau=\sum_{\tau\geq1}\tau p_\tau<\infty\),
	then
	\begin{equation}
		\label{eq:jobs-asymptotic}
		J^H(\beta)
		\sim
		\bar\tau(1-\beta)
		\qquad\text{as }\beta\to1.
	\end{equation}
\end{proposition}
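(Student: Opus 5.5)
The plan is to treat $J^H$ as a series of functions on $[0,1]$ and pass the limit inside the sum. For the first claim, each summand $p_\tau(1-\beta^\tau)$ is nonnegative and bounded by $p_\tau$, and $\sum_\tau p_\tau=1<\infty$. Since $1-\beta^\tau\to0$ as $\beta\to1$ for each fixed $\tau$, dominated convergence for series (Tannery's theorem) gives $J^H(\beta)\to\sum_\tau p_\tau\cdot 0=0$. To keep it elementary, I would instead fix $\varepsilon>0$, choose $N$ with $\sum_{\tau>N}p_\tau<\varepsilon/2$, and then take $\beta$ close enough to $1$ that the finite sum $\sum_{\tau\le N}p_\tau(1-\beta^\tau)<\varepsilon/2$.

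For the asymptotic claim, I would divide by $1-\beta$ and use the identity
\[
\frac{1-\beta^\tau}{1-\beta}=\sum_{k=0}^{\tau-1}\beta^k ,
\]
which gives
\[
\frac{J^H(\beta)}{1-\beta}=\sum_{\tau\ge1}p_\tau\sum_{k=0}^{\tau-1}\beta^k .
\]
For $\beta\in[0,1)$ each inner sum lies in $[0,\tau]$, is nondecreasing in $\beta$, and converges to $\tau$ as $\beta\to1$. The dominating sequence $p_\tau\tau$ is summable precisely because $\bar\tau<\infty$. Dominated convergence (or monotone convergence, since everything is nonnegative and increasing in $\beta$) then yields $J^H(\beta)/(1-\beta)\to\sum_\tau\tau p_\tau=\bar\tau$. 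Because $\bar\tau\ge1>0$, this is exactly $J^H(\beta)\sim\bar\tau(1-\beta)$.

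The only delicate step is justifying the interchange of limit and infinite sum in the second part. This is where the finiteness of $\bar\tau$ is used, and it supplies the summable dominating bound $p_\tau\tau$. If one prefers to avoid measure-theoretic language, the same tail-splitting argument as in the first part works: bound the tail $\sum_{\tau>N}p_\tau\sum_{k<\tau}\beta^k\le\sum_{\tau>N}\tau p_\tau$, make it small uniformly in $\beta$, and let the finite head converge. I would also remark that without $\bar\tau<\infty$ the monotone convergence argument shows $J^H(\beta)/(1-\beta)\to\infty$, so the finite-mean hypothesis is exactly what gives the linear rate.
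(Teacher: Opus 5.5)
Your proposal is correct and follows essentially the same route as the paper: dominated convergence with the bound $p_\tau$ for the first claim, then the identity $\frac{1-\beta^\tau}{1-\beta}=\sum_{k=0}^{\tau-1}\beta^k\le\tau$ with the summable dominating sequence $\tau p_\tau$ for the asymptotic. Your extra remarks are valid but not needed: the elementary tail-splitting version, the monotone-convergence alternative, and the divergence of $J^H(\beta)/(1-\beta)$ when $\bar\tau=\infty$.
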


The proof is given in Appendix~\ref{app:proof-task-automation-jobs}.

The proposition translates frontier automation into the extensive margin of
human-filled jobs.  Let \(\bar N\) denote the mass of potential jobs.  The mass of
human-filled jobs is
\begin{equation}
	\label{eq:human-filled-job-mass}
	N^H(\beta)=\bar N J^H(\beta).
\end{equation}
Aggregate labour income \(Y^L(x)\) is determined by the production side.  Whenever
\(J^H(\beta)>0\), define \(\omega(x)\) as average labour income per human-filled
job:
\begin{equation}
	\label{eq:average-income-human-jobs}
	\omega(x)
	=
	\frac{Y^L(x)}{N^H(\beta)}
	=
	\frac{Y^L(x)}{\bar N J^H(\beta)}.
\end{equation}
Thus \(\omega\) is an intensive labour-income margin at the job level: the average
labour income paid across the jobs that remain human-filled. The two intensive
margins have distinct interpretations: \(w\) is the return per unit of human labour,
while \(\omega\) is aggregate labour income per remaining human-filled job.

Along a full-automation path satisfying
\[
	Y^L(x(t))
	\sim
	(1-\beta(t))a(t)^\theta h(K(t),L(t),\Lambda(t)),
\]
and with \(\bar\tau<\infty\), Proposition~\ref{prop:task-automation-jobs} implies
\(J^H(\beta(t))\sim\bar\tau(1-\beta(t))\).
Hence the average labour income per human-filled job satisfies
\begin{equation}
	\label{eq:income-human-jobs-asymptotic}
	\omega(x(t))
	\sim
	\frac{
		a(t)^\theta h(K(t),L(t),\Lambda(t))
	}{
		\bar N\bar\tau
	}.
\end{equation}

This cancellation is the key labour-market implication of the finite-mean task
benchmark.  Near full automation, the frontier term \(1-\beta\) contracts aggregate
labour income and the measure of human-filled jobs at the same rate.  It therefore
drops out of average labour income per human-filled job.  Conditional on the jobs
that remain human-filled, the intensive margin is governed by automation
capability and the scale term \(h\).

Labour-income collapse can therefore be an extensive-margin phenomenon.  For
example, if
\(a(t)^\theta h(K(t),L(t),\Lambda(t))\to c\in(0,\infty)\),
then aggregate labour income vanishes as \(\beta(t)\to1\), while
\(\omega(x(t))\to c/(\bar N\bar\tau)>0\).
\begin{corollary}
\label{cor:diverging-income-human-jobs}
Consider a full-automation path with \(\bar\tau<\infty\).  If
\[a(t)^\theta h(K(t),L(t),\Lambda(t))\to\infty\]
while
\[(1-\beta(t))a(t)^\theta h(K(t),L(t),\Lambda(t))\to0,\]
then
\[
	Y^L(x(t))\to0,
	\qquad
	N^H(\beta(t))\to0,
	\qquad and\;\;\; 
	\omega(x(t))\to\infty.
\]
\end{corollary}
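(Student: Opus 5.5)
The corollary follows from the two asymptotic equivalences already in hand: the labour-income representation in Equation~\eqref{eq:labour-income-asymptotic} and the job-count asymptotics of Proposition~\ref{prop:task-automation-jobs}. We treat the three limits in turn.

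\emph{Aggregate labour income.} By Equation~\eqref{eq:labour-income-asymptotic},
\[
Y^L(x(t))\sim(1-\beta(t))a(t)^\theta h(K(t),L(t),\Lambda(t)).
\]
So the ratio of $Y^L(x(t))$ to the right-hand side tends to one. The right-hand side tends to zero by the second hypothesis, hence $Y^L(x(t))\to0$.

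\emph{Human-filled jobs.} Since $\beta(t)\to1$ along a full-automation path, the first part of Proposition~\ref{prop:task-automation-jobs} gives $J^H(\beta(t))\to0$. It follows that $N^H(\beta(t))=\bar N J^H(\beta(t))\to0$. This step needs no moment condition.

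\emph{Income per human-filled job.} This is the only step needing care, because it divides one vanishing quantity by another. Since $\bar\tau<\infty$, Proposition~\ref{prop:task-automation-jobs} gives $J^H(\beta(t))\sim\bar\tau(1-\beta(t))$. In particular $J^H(\beta(t))>0$ for large $t$, so $\omega$ is well defined there; we use $\beta(t)<1$ eventually, which is implicit in the path being a full-automation path on which $\omega$ is defined. Asymptotic equivalence is preserved under quotients of eventually positive functions. Dividing the two equivalences, the common factor $1-\beta(t)$ cancels, which recovers Equation~\eqref{eq:income-human-jobs-asymptotic}:
\[
\omega(x(t))\sim\frac{a(t)^\theta h(K(t),L(t),\Lambda(t))}{\bar N\bar\tau}.
\]
The right-hand side diverges to $+\infty$ by the first hypothesis, because $\bar N\bar\tau$ is a finite positive constant. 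Hence $\omega(x(t))\to\infty$.

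Most of this is routine. The points to state explicitly are the eventual positivity of the quantities whose ratios are taken, and that $\bar\tau\geq1>0$ (since $\tau\geq1$), which makes the constant $\bar N\bar\tau$ nonzero.
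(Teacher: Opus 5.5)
Your proposal is correct and follows the paper's proof: the second hypothesis together with Equation~\eqref{eq:labour-income-asymptotic} gives $Y^L(x(t))\to0$, Proposition~\ref{prop:task-automation-jobs} gives $N^H(\beta(t))\to0$, and the quotient asymptotic~\eqref{eq:income-human-jobs-asymptotic} together with the first hypothesis gives $\omega(x(t))\to\infty$. Your added remarks add rigour without changing the argument: the eventual positivity of $1-\beta(t)$ and $J^H(\beta(t))$, and the positivity of the constant $\bar N\bar\tau$, which the paper leaves implicit.
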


\begin{proof}
The second condition and Equation~\eqref{eq:labour-income-asymptotic} imply
\(Y^L(x(t))\to0\).  Proposition~\ref{prop:task-automation-jobs} implies
\(N^H(\beta(t))\to0\), while Equation~\eqref{eq:income-human-jobs-asymptotic}
and the first condition imply \(\omega(x(t))\to\infty\).
\end{proof}

Since \(w(x)=Y^L(x)/L\), the corresponding wage implication depends on the
path of human-labour supply. If \(L\) remains bounded away from zero---in
particular, if labour supply is constant---labour income collapse implies
\(w(x(t)) \to 0\), even as average labour income per surviving human-filled
job may remain positive or diverge.

This possibility is related in spirit to \citet{raymookherjee2022growth}, where
progressive automation can drive the labour share to zero while absolute real wages
rise without bound; here the result instead follows from the shrinking extensive
margin of human-filled jobs.

The task-to-job structure determines the extensive job margin; automation
capability and the scale term determine the intensive labour-income margin within
the jobs that remain.

For the investment-surplus problem, the aggregate labour-income margin remains
decisive.
Worker saving depends on \(Y^L\), not separately on the number of human-filled jobs
or on average labour income within those jobs:
\(\Sigma(x)-\Sigma^E(x)=\sigma^L Y^L(x)\).
Thus, when labour income collapses, the worker-saving margin disappears even if
some human-filled jobs remain and even if average labour income in those jobs does
not collapse.  This is the object that the optimal-reliance problem takes as given: the
automation economy may retain human-filled jobs while ceasing to draw on worker
saving as a source of automation finance.

\subsection{Artificial-Labour Accumulation and Financing Alignment}
\label{subsec:artificial-labour-financing}

Artificial labour affects both sides of the financing comparison.  Its income
contributes directly to non-labour saving, while its accumulation may either raise
or reduce the labour income that generates worker saving.  The production
benchmark gives a sharp characterisation of these two effects.

\begin{proposition}
\label{prop:artificial-labour-accumulation}
In the production benchmark, at an interior state and holding
\(K,L,a,\beta\) fixed,
\begin{equation}
\label{eq:SigmaE-lambda}
\frac{\partial\Sigma^E}{\partial\log\Lambda}
=
(1-\alpha)qY
\left[
\sigma^K\alpha
+\sigma^\Lambda(1-\alpha)q
+\sigma^\Lambda\rho(1-q)
\right]
\geq0.
\end{equation}
Since \(q\in(0,1)\) at an interior state and \(\rho\in(0,1)\), the
derivative is strictly positive whenever \(\Sigma^E>0\).  By contrast,
\begin{equation}
\label{eq:YL-lambda}
\frac{\partial\log Y^L}{\partial\log\Lambda}
=
(1-\alpha-\rho)q.
\end{equation}
Hence artificial-labour accumulation raises labour income if
\(\rho<1-\alpha\), leaves it unchanged if \(\rho=1-\alpha\), and lowers it if
\(\rho>1-\alpha\).  Since the elasticity of substitution between human and
artificial labour is \(1/(1-\rho)\), the threshold is equivalently an elasticity
of substitution equal to \(1/\alpha\).

In the linear investment benchmark,
\begin{equation}
\label{eq:B-lambda}
\frac{\partial B}{\partial\log\Lambda}
=
\frac{\partial\Sigma^E}{\partial\log\Lambda}
-\delta^\Lambda\Lambda
-\Lambda I^\beta_\Lambda(x).
\end{equation}
Thus a marginal increase in the artificial-labour stock moves the state toward
breakaway if and only if
\begin{equation}
\label{eq:B-lambda-condition}
(1-\alpha)qY
\left[
\sigma^K\alpha
+\sigma^\Lambda(1-\alpha)q
+\sigma^\Lambda\rho(1-q)
\right]
\geq
\delta^\Lambda\Lambda+\Lambda I^\beta_\Lambda(x).
\end{equation}
If the target frontier rate is locally independent of \(\Lambda\), this reduces to
the condition that the marginal non-labour saving generated by a larger
artificial-labour stock covers its marginal maintenance requirement,
\begin{equation}
\label{eq:B-lambda-condition-simple}
(1-\alpha)qY
\left[
\sigma^K\alpha
+\sigma^\Lambda(1-\alpha)q
+\sigma^\Lambda\rho(1-q)
\right]
\geq
\delta^\Lambda\Lambda.
\end{equation}
\end{proposition}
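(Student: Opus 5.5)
The proof is a direct computation in logarithmic derivatives with respect to \(\log\Lambda\), holding \(K,L,a,\beta\) fixed. The two building blocks are the elasticities of \(Y\) and of the artificial-labour share \(q\). Once these are in hand, Equations~\eqref{eq:SigmaE-lambda}, \eqref{eq:YL-lambda} and \eqref{eq:B-lambda} follow by the product and chain rules applied to Equations~\eqref{eq:non-labour-saving-production-example}, \eqref{eq:labour-income-production-example} and \eqref{eq:tfull-linear}.

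\emph{Elasticities of \(Y\) and \(q\).} Write \(\log Y=\alpha\log K+\frac{1-\alpha}{\rho}\log M^\rho\). Only the term \(\beta\Lambda^\rho\) inside \(M^\rho\) depends on \(\Lambda\), so
\[
\frac{\partial\log M^\rho}{\partial\log\Lambda}=\frac{\rho\beta\Lambda^\rho}{M^\rho}=\rho q,
\qquad\text{hence}\qquad
\frac{\partial\log Y}{\partial\log\Lambda}=(1-\alpha)q .
\]
From Equation~\eqref{eq:artificial-labour-share},
\[
\frac{\partial\log q}{\partial\log\Lambda}=\rho-\rho q=\rho(1-q),
\qquad\text{so}\qquad
\frac{\partial q}{\partial\log\Lambda}=\rho q(1-q).
\]

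\emph{Non-labour saving.} Differentiate \(\Sigma^E=[\sigma^K\alpha+\sigma^\Lambda(1-\alpha)q]Y\) by the product rule. This gives
\[
[\sigma^K\alpha+\sigma^\Lambda(1-\alpha)q](1-\alpha)qY+\sigma^\Lambda(1-\alpha)Y\rho q(1-q).
\]
Factoring out \((1-\alpha)qY\) yields Equation~\eqref{eq:SigmaE-lambda}. Non-negativity is immediate because every factor is non-negative.

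For strict positivity, note that \(q\in(0,1)\) at an interior state. Suppose \(\Sigma^E>0\). Then \(\sigma^K>0\) or \(\sigma^\Lambda>0\). In either case the bracket contains a strictly positive term, since \(\alpha>0\), \((1-\alpha)q>0\) and \(\rho(1-q)>0\), so the derivative is strictly positive.

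\emph{Labour income.} By Equation~\eqref{eq:labour-income-production-example}, \(Y^L=(1-\alpha)Y(1-q)\). Therefore
\[
\frac{\partial\log Y^L}{\partial\log\Lambda}=(1-\alpha)q-\frac{\rho q(1-q)}{1-q}=(1-\alpha-\rho)q .
\]
Since \(q>0\), the sign of this derivative is the sign of \(1-\alpha-\rho\), which gives the three cases. For the elasticity restatement, \(\rho>1-\alpha\) is equivalent to \(1-\rho<\alpha\), that is, \(1/(1-\rho)>1/\alpha\).

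\emph{Breakaway surplus.} In the linear benchmark,
\[
B=\Sigma^E-\delta^K K-\delta^\Lambda\Lambda-\delta^a a-I^\beta(x).
\]
Differentiating in \(\log\Lambda\) uses \(\partial(\delta^\Lambda\Lambda)/\partial\log\Lambda=\delta^\Lambda\Lambda\) and, by the chain rule, \(\partial I^\beta/\partial\log\Lambda=\Lambda I^\beta_\Lambda\). The remaining maintenance terms do not depend on \(\Lambda\). This gives Equation~\eqref{eq:B-lambda}.

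Substituting Equation~\eqref{eq:SigmaE-lambda} and requiring \(\partial B/\partial\log\Lambda\geq0\) gives Equation~\eqref{eq:B-lambda-condition}. If \(\eta^\beta\) is locally independent of \(\Lambda\), then by Equation~\eqref{eq:Ibeta-linear} \(I^\beta_\Lambda=0\). This yields Equation~\eqref{eq:B-lambda-condition-simple}.

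There is no substantive obstacle. The only point needing care is the strict-positivity claim, specifically the case split on which of \(\sigma^K,\sigma^\Lambda\) is positive given \(\Sigma^E>0\). A secondary point is to read ``moves toward breakaway'' as the weak inequality \(\partial B/\partial\log\Lambda\geq0\), consistent with the \(\geq\) in the statement.
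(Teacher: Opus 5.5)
Your proof is correct and follows essentially the paper's own argument. Like the paper, you compute the elasticities $(1-\alpha)q$ and $\rho(1-q)$ of $Y$ and $q$, apply the product rule to $\Sigma^E$, log-differentiate $Y^L$, and differentiate the linear-benchmark $B$ term by term. Rewriting $Y^L=(1-\alpha)Y(1-q)$ is a cosmetic variant of the paper's use of $\partial\log M/\partial\log\Lambda=q$, and your explicit checks of strict positivity (where $\Sigma^E>0$ also gives $Y>0$ for the prefactor) and of the elasticity threshold spell out steps the paper leaves implicit.
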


The proof is given in Appendix~\ref{app:proof-artificial-labour-accumulation}.

The result of Proposition \ref{prop:artificial-labour-accumulation} separates two effects of artificial labour accumulation.  A larger value of
\(\Lambda\) always expands the non-labour saving base: both the output effect and
the shift toward artificial-labour income are positive.  Its effect on the
worker-saving base, however, depends on substitution.  At states satisfying
Equation~\eqref{eq:B-lambda-condition}, if \(\rho>1-\alpha\), a larger
artificial-labour stock moves the economy toward breakaway while simultaneously
eroding labour income; financial independence and erosion of worker finance are
therefore aligned.  If \(\rho<1-\alpha\), the same accumulation can move the
economy toward breakaway while raising labour income; financial independence then
becomes more feasible even as the worker-financing source remains valuable.  This
distinction will reappear in the optimal-reliance problem below.

\section{Optimal Reliance on Worker Finance}
\label{sec:optimal}

\subsection{The Optimal-Reliance Problem}
\label{subsec:optimal-reliance-control-problem}

Breakaway and viability study whether non-labour saving can finance continuation,
while the preceding section characterises the worker-saving base,
\(\sigma^L Y^L(x)\).  We now extend this question from viability to optimality by
making the path of reliance on worker finance endogenous.  The investment
programme may draw on the worker-saving base \(\sigma^L Y^L(x)\), but doing so may
carry an external-finance wedge, while reducing that reliance may itself be costly.
We therefore ask how much worker finance an optimal automation programme uses
along the transition and how its reliance evolves over time.  To isolate this
optimality question from financing feasibility, we treat any otherwise admissible
investment path as financeable.

Denote by 
\(x(t)=(K(t),\Lambda(t),a(t),\beta(t))\)
the automation state, and by
\[I(t)=(I^K(t),I^\Lambda(t),I^a(t),I^\beta(t))\]
 the investment vector.  The automation state evolves according to
\(\dot{x}(t)=G(x(t),I(t))\).
As in the definition of \(\mathcal{T}^{\mathrm{full}}\), the investment path must
maintain the automation system and advance the frontier:
\[
	G^K(x(t),I(t))\geq0,
	\qquad
	G^\Lambda(x(t),I(t))\geq0,
	\qquad
	G^a(x(t),I(t))\geq0,
	\qquad
	G^\beta(x(t),I(t))\geq\eta^\beta(x(t)).
\]
These continuation requirements keep the technological programme comparable to
the programme used to define the full-investment requirement.

The new state variable in this section is reliance on worker finance.  Let
\(u(t)\in[0,1]\).
We distinguish the source of funds from its use in the accumulation circuit:
\(\sigma^L Y^L(x)\) is worker saving, while \(u\sigma^L Y^L(x)\) is the worker
finance supplied to the automation programme.  Thus \(u\) measures the degree of
reliance on worker finance.  When \(u=1\), the investment programme uses all worker
saving available to it; when \(u=0\), it no longer draws on worker saving.  At
state \(x\), the saving flow available to the accumulation circuit is
\(\Sigma^E(x)+u\sigma^L Y^L(x)\).
Worker finance is therefore a financing input into the optimal investment
programme, and the programme determines the extent and duration of its reliance
through the path of the reliance reduction rate defined next.

Let the reliance reduction rate satisfy
\(z(t)\geq0\)
and let reliance evolve according to
\(\dot{u}(t)=-z(t)\).
A high value of \(z(t)\) means a rapid reduction in reliance; a low value means a
gradual reduction.  The restriction \(z(t)\geq0\) defines the optimality problem
over the class of non-increasing reliance paths.  Once \(u(t)=0\), reliance has
reached zero and feasibility requires \(z(t)=0\).

The key object is the investor's surplus.  For a given investment programme, the
\emph{flow surplus} is the saving available to the accumulation circuit minus the cost of
implementing the programme.  In the transition problem, the cost of a programme may
depend not only on the automation investment vector \(I\), but also on the current
degree of reliance on worker finance and on the speed at which this reliance is
being reduced.  We therefore write the transition cost as
\(C(I,u,z;x)\),
and the corresponding flow investment surplus is
\(\Sigma^E(x)+u\sigma^L Y^L(x)-C(I,u,z;x)\).

Worker saving has a direct financing benefit: it raises the saving flow available
to the investor by \(u\sigma^L Y^L(x)\).  Reliance on worker finance may also carry
an external-finance wedge relative to internally generated non-labour finance,
arising for example from intermediation, contracting, information, or control
costs.  We capture this wedge in reduced form through the dependence of
\(C(I,u,z;x)\) on \(u\).  We assume that, on interior reliance states,
\(C_u(I,u,z;x)>0\),
so that greater reliance on worker finance raises the programme cost, other things
equal.  The optimal-reliance results below are conditional on this positive
marginal external-finance wedge.  Since \(u\) scales the share of worker saving
supplied as worker finance, both \(C_u\) and \(\sigma^L Y^L(x)\) are marginal flow
quantities.  Their difference determines the current net value of worker finance,
while the discount rate \(r\) determines how future values of this difference are
weighted.  The link with the preceding section enters through the state \(x\), and
in particular through labour income \(Y^L(x)\).  As the labour-income base shrinks,
the amount of finance generated by worker saving falls.  If the marginal
external-finance wedge does not fall at the same rate, continued reliance on worker
finance becomes progressively less attractive.

The dependence of \(C\) on \(z\) captures the cost of reducing reliance quickly.  We
assume
\[
	C_z(I,u,0;x)=0,
	\qquad
	C_z(I,u,z;x)>0\ \text{for }z>0,
	\qquad
	C_{zz}(I,u,z;x)>0.
\]
Thus faster reliance reduction is more costly at the margin.  The cost function
therefore contains both the financing cost of remaining reliant on worker finance
and the cost of reducing that reliance quickly.

This formulation allows these components to interact through the state \(x\).  In
applications, however, it may be useful to decompose the transition cost as
\(C(I,u,z;x)=C^A(I;x)+C^R(u;x)+C^Z(z;x)\),
where \(C^A\) is the resource cost of automation investment, \(C^R\) is the
financing cost associated with reliance on worker finance, and \(C^Z\) is the
reliance reduction cost.

\begin{definition}
\label{def:optimal-reliance-problem}
The \emph{optimal reliance problem} is to choose the investment path \(I(t)\) and the
reliance reduction rate \(z(t)\) so as to maximise discounted investor surplus,
where \(u(t)\) denotes reliance on worker finance:
\[
	\max_{I(t),z(t)}
	\int_0^\infty e^{-rt}
	\left[
		\Sigma^E(x(t))
		+
		u(t)\sigma^L Y^L(x(t))
		-
		C(I(t),u(t),z(t);x(t))
	\right]dt
\]
subject to
\[
	\dot{x}(t)=G(x(t),I(t)),
	\qquad
	\dot{u}(t)=-z(t),
\]
\[
	G^K(x(t),I(t))\geq0,
	\qquad
	G^\Lambda(x(t),I(t))\geq0,
	\qquad
	G^a(x(t),I(t))\geq0,
	\qquad
	G^\beta(x(t),I(t))\geq\eta^\beta(x(t)),
\]
\[
	I(t)\geq0,
	\qquad
	z(t)\geq0,
	\qquad
	0\leq u(t)\leq1,
\]
and initial conditions
\(x(0)=x_0\) and \(u(0)=u_0\).
\end{definition}

The accounting of Definition~\ref{def:optimal-reliance-problem} is linked to the
earlier breakaway object.  When reliance has reached zero and there is no active
reliance reduction, \(u=0\) and \(z=0\), so the flow surplus is
\(\Sigma^E(x)-C(I,0,0;x)\).
For the same continuation requirements, the full-investment requirement is
\(\mathcal{T}^{\mathrm{full}}(x)=\inf_I C(I,0,0;x)\)
subject to the full-investment continuation requirements.  Recall that the
breakaway surplus is
\(B(x)=\Sigma^E(x)-\mathcal{T}^{\mathrm{full}}(x)\),
which yields the following correspondence without requiring the infimum to be
attained.

\begin{lemma}
\label{lem:breakaway-correspondence}
At any state $x$ at which the full-investment continuation requirements are
feasible,
\[
	\sup_I\left\{\Sigma^E(x)-C(I,0,0;x)\right\}
	=
	\Sigma^E(x)-\mathcal{T}^{\mathrm{full}}(x)
	=
	B(x),
\]
where the supremum is over investment vectors satisfying those continuation
requirements.
\end{lemma}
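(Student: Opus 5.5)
The plan is to argue directly from the definition of the full investment requirement, using only the elementary identity $\sup_{I\in A}(c-f(I))=c-\inf_{I\in A}f(I)$ for a constant $c$ and a nonempty set $A$. No attainment of the infimum is needed, which is why the lemma is stated without it.

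\textbf{Step 1 (the admissible set).} Fix a state $x$ and let $\mathcal{A}(x)$ be the set of $I\geq0$ satisfying $G^K(x,I)\geq0$, $G^\Lambda(x,I)\geq0$, $G^a(x,I)\geq0$ and $G^\beta(x,I)\geq\eta^\beta(x)$. These are the constraints in \eqref{eq:tfull-general-model}, and they are also the continuation requirements imposed on the supremum in the lemma. The feasibility hypothesis says precisely that $\mathcal{A}(x)\neq\emptyset$, so both the supremum and the infimum are taken over the same nonempty set.

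\textbf{Step 2 (identify the cost).} Use the correspondence stated just before the lemma: for the same continuation requirements, $\mathcal{T}^{\mathrm{full}}(x)=\inf_{I\in\mathcal{A}(x)}C(I,0,0;x)$. In other words, the transition cost evaluated at zero reliance and zero reduction, $C(I,0,0;x)$, coincides with the investment cost $C(I;x)$ used in Section~\ref{sec:framework}. I will state this identification explicitly as the convention linking the two cost functions; it is the one step that depends on the modelling setup rather than on algebra. If $C(I,0,0;x)$ and $C(I;x)$ were allowed to differ, the lemma would fail, so this convention has to be on record.

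\textbf{Step 3 (the algebra).} Since $\Sigma^E(x)$ does not depend on $I$, for every $I\in\mathcal{A}(x)$ we have $\Sigma^E(x)-C(I,0,0;x)\le\Sigma^E(x)-\mathcal{T}^{\mathrm{full}}(x)$. Hence the supremum is at most this value.

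For the reverse inequality, take any $\varepsilon>0$. By the definition of the infimum there is some $I_\varepsilon\in\mathcal{A}(x)$ with $C(I_\varepsilon,0,0;x)<\mathcal{T}^{\mathrm{full}}(x)+\varepsilon$, and therefore the supremum exceeds $\Sigma^E(x)-\mathcal{T}^{\mathrm{full}}(x)-\varepsilon$.

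If $\mathcal{T}^{\mathrm{full}}(x)=-\infty$, both sides equal $+\infty$, and the identity holds in the extended reals. In the paper's setting $C\ge0$, so this case does not arise.

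\textbf{Step 4 (conclude).} Letting $\varepsilon\to0$ gives equality of the supremum with $\Sigma^E(x)-\mathcal{T}^{\mathrm{full}}(x)$. This last quantity equals $B(x)$ by \eqref{eq:breakaway-surplus-model}.

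Overall, the only real obstacle is the bookkeeping in Step 2, namely confirming that the constraint sets and cost functions match. The rest is routine.
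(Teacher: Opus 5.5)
Your proposal is correct and follows the paper's proof: you pull the $I$-independent term $\Sigma^E(x)$ out of the supremum and identify the remaining infimum of $C(I,0,0;x)$ over the common feasible set with $\mathcal{T}^{\mathrm{full}}(x)$, which is the stipulation the paper makes just before the lemma, and then conclude by the definition of $B(x)$; your $\varepsilon$-argument just spells out the identity $\sup(c-f)=c-\inf f$ that the paper uses implicitly. One small caveat is that $C\geq 0$ holds in the linear benchmark but is not assumed for the general transition cost, although your extended-real remark already covers that case.
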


\begin{proof}
The term \(\Sigma^E(x)\) is independent of \(I\).  Taking the supremum of the flow
surplus therefore subtracts from \(\Sigma^E(x)\) the infimum of \(C(I,0,0;x)\)
over the same feasible set, which is \(\mathcal{T}^{\mathrm{full}}(x)\) by
definition.  The final equality is the definition of \(B(x)\).
\end{proof}

\begin{remark}
\label{rem:breakaway-zero-reliance}
Lemma \ref{lem:breakaway-correspondence} is an accounting correspondence, while
admissibility in Definition \ref{def:optimal-reliance-problem} is defined by the
permissive financing benchmark above.  Breakaway, \(B(x) \geq 0\), is the current-flow
self-financing benchmark studied earlier, whereas optimal \(u=0\) concerns the use
of worker finance when that financing restriction is removed.  The two definitions
answer distinct questions and, by themselves, establish neither implication between
\(B(x)\geq0\) and optimal \(u=0\).
\end{remark}

The optimal path determines whether and when \(u(t)\) reaches zero.  Worker saving
contributes to investment surplus while \(u>0\), while reliance on this external
source carries the reduced-form financing wedge captured by \(C_u\).  A faster
reduction in reliance raises costs through \(C_z\).  Investment changes the
automation state, and the automation state changes both the future contribution of
worker finance and the future cost of relying on it.  The optimal path therefore
jointly determines automation investment and the path of reliance on worker finance.

The remainder of the section first derives the explicit optimality conditions,
then characterises the forward-looking dynamics of optimal reliance and their
implications, and finally gives stronger conditions under which reliance reaches
zero in finite time.

\subsection{Optimality Conditions}
\label{subsec:reliance-optimality-conditions}
Let
\[
	R(x,I)
	=
	\left(
		G^K(x,I),
		G^\Lambda(x,I),
		G^a(x,I),
		G^\beta(x,I)-\eta^\beta(x)
	\right)
\]
denote the vector of continuation requirements, so that the constraints are
\(R(x,I)\geq0\).  Let \(\phi(t)\geq0\) denote the associated current-value
multipliers.  Let \(\lambda(t)\) denote the current-value costate vector associated
with the automation state \(x(t)\), and let \(\mu(t)\) denote the current-value
shadow value of reducing reliance on worker finance.  Equivalently, the standard
costate associated with \(u(t)\) is \(-\mu(t)\).  Thus a positive \(\mu(t)\) means
that an additional unit of reliance has negative marginal dynamic value.

With this sign convention, the current-value Hamiltonian is
\[
	\mathcal{H}
	=
	\Sigma^E(x)
	+
	u\sigma^L Y^L(x)
	-
	C(I,u,z;x)
	+
	\lambda\cdot G(x,I)
	+
	\mu z
	+
	\phi\cdot R(x,I).
\]

\begin{lemma}
\label{lem:interior-optimality-conditions}
Let \((I,z)\) be an optimal control path and consider an arc on which \(u(t)>0\), so
that the state constraint \(u(t)\geq0\) is not binding.  Suppose there exist
current-value multipliers \((\lambda,-\mu,\phi)\) satisfying the current-value
maximum principle for Definition~\ref{def:optimal-reliance-problem}: Hamiltonian
maximisation over the admissible controls, the current-value costate equations, and
complementary slackness for the continuation requirements.  Then
\[
	\phi(t)\geq0,
	\qquad
	R(x(t),I(t))\geq0,
	\qquad
	\phi(t)\cdot R(x(t),I(t))=0.
\]
For an interior investment choice, the first-order condition is
\[
	C_I(I,u,z;x)
	=
	G_I(x,I)'\lambda
	+
	R_I(x,I)'\phi.
\]
For the reliance-reduction choice,
\[
	z(t)\geq0,
	\qquad
	C_z(I(t),u(t),z(t);x(t))-\mu(t)\geq0,
	\qquad
	z(t)\left[
		C_z(I(t),u(t),z(t);x(t))-\mu(t)
	\right]=0.
\]
Hence, on an interior arc with declining reliance, where \(z(t)>0\),
\(C_z(I,u,z;x)=\mu\).
The shadow equation for reliance on worker finance is
\[
	\dot{\mu}
	=
	r\mu
	+
	\sigma^L Y^L(x)
	-
	C_u(I,u,z;x).
\]
The costate equation for the automation state is
\[
	\dot{\lambda}
	=
	r\lambda
	-
	\Sigma^E_x(x)
	-
	u\sigma^L Y^L_x(x)
	+
	C_x(I,u,z;x)
	-
	G_x(x,I)'\lambda
	-
	R_x(x,I)'\phi .
\]
\end{lemma}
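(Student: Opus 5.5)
The plan is to read each displayed condition off the current-value maximum principle applied to the Hamiltonian \(\mathcal{H}\) written above. The lemma assumes that multipliers \((\lambda,-\mu,\phi)\) exist, so no existence theory is needed. The only delicate points are the sign convention for the reliance costate and the fact that the state constraint \(u\geq0\) is slack on the arc.

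\textbf{Step 1 (constraint multipliers).} I will state the constraints in the form \(R(x,I)\geq0\) and adjoin them to \(\mathcal{H}\) with multipliers \(\phi\). Complementary slackness in the maximum principle then gives \(\phi\geq0\), \(R\geq0\) and \(\phi\cdot R=0\) directly.

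\textbf{Step 2 (control first-order conditions).} Next I maximise \(\mathcal{H}\) over the controls.
\begin{itemize}
\item \emph{Investment.} At an interior choice, \(\partial\mathcal{H}/\partial I=0\). Differentiating gives \(-C_I+G_I'\lambda+R_I'\phi=0\), which is the stated condition.
\item \emph{Reliance reduction.} Here \(z\) ranges over \([0,\infty)\) and \(\mathcal{H}\) depends on \(z\) through \(-C+\mu z\). Since \(C_{zz}>0\), the Hamiltonian is strictly concave in \(z\), so the Karush--Kuhn--Tucker conditions for this one-sided problem are necessary and sufficient: \(\partial\mathcal{H}/\partial z=\mu-C_z\leq0\), \(z\geq0\), and complementary slackness. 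These are the displayed conditions.
\item \emph{Declining reliance.} When \(z>0\), complementary slackness forces \(C_z=\mu\).
\end{itemize}

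\textbf{Step 3 (costate equations).} The reliance equation is where the sign convention must be tracked carefully. The standard current-value costate for \(u\) is \(\nu=-\mu\), and the Hamiltonian term \(\mu z\) equals \(\nu\dot u\), since \(\dot u=-z\). The standard current-value costate equation is \(\dot\nu=r\nu-\partial\mathcal{H}/\partial u\).

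Because \(u>0\) on the arc, no measure-valued multiplier from the state constraint enters. Also, \(u\) does not appear in \(G\) or \(R\), so
\[
\partial\mathcal{H}/\partial u=\sigma^L Y^L(x)-C_u .
\]
Substituting \(\nu=-\mu\) gives \(-\dot\mu=-r\mu-\sigma^LY^L+C_u\), which is the stated shadow equation.

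The automation costate follows in the same way. From \(\dot\lambda=r\lambda-\partial\mathcal{H}/\partial x\), the partial derivative \(\partial\mathcal{H}/\partial x\) expands term by term as
\[
\Sigma^E_x+u\sigma^LY^L_x-C_x+G_x'\lambda+R_x'\phi ,
\]
and this yields the displayed equation.

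I expect the main obstacle to be justifying that the state constraint \(0\leq u\leq1\) contributes nothing on the arc. The lower bound is slack by hypothesis. The upper bound \(u\leq1\) can bind only if \(u\) stays at \(u_0=1\), but \(\dot u=-z\leq0\), so \(u\) can never be pushed above 1 and that constraint is never active in a way that adds a multiplier. Everything else is bookkeeping.
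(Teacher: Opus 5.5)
Your proposal is correct and follows the same route as the paper: you read each condition directly off the current-value Hamiltonian, using KKT for the constraint $z\geq0$, the convention that the standard costate of $u$ is $-\mu$, and the current-value costate equation for $x$. Your extra remarks are not needed but do no harm: strict concavity of $\mathcal{H}$ in $z$ only adds sufficiency, and $u\leq1$ is redundant because $\dot u=-z\leq0$. They make the bookkeeping more explicit than the paper's sketch.
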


\begin{proof}
Write the abstract maximum-principle conditions for the Hamiltonian above.  For an
interior investment choice, Hamiltonian maximisation gives the stated investment
first-order condition.  Maximisation with the constraint \(z\geq0\) gives the stated
KKT conditions and therefore \(C_z=\mu\) whenever \(z>0\).  Since the standard
costate of \(u\) is \(-\mu\), its current-value costate equation, together with the
\(u\)-derivative of the Hamiltonian, gives the stated shadow equation.  Applying the
current-value costate equation to \(x\) gives the stated equation for \(\lambda\).
Complementary slackness gives the displayed conditions for \(\phi\).
\end{proof}

Investment is chosen such that the marginal programme cost of an interior investment
component is balanced against the dynamic value of changing the automation state
and the value of satisfying the continuation requirements.  When a continuation
requirement binds, its multiplier raises the marginal value of investment in the
corresponding direction.  On an arc with declining reliance, the marginal cost of
reducing reliance more rapidly equals the shadow value of reducing reliance on
worker finance.

\begin{corollary}
\label{cor:reliance-reduction-euler}
On an interior arc with declining reliance,
\[
	\frac{d}{dt}C_z(I(t),u(t),z(t);x(t))
	=
	rC_z(I(t),u(t),z(t);x(t))
	+
	\sigma^L Y^L(x(t))
	-
	C_u(I(t),u(t),z(t);x(t)).
\]
\end{corollary}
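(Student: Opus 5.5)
The corollary follows directly from Lemma~\ref{lem:interior-optimality-conditions}: we combine the reliance-reduction first-order condition with the shadow equation for \(\mu\). On an interior arc with declining reliance we have \(u(t)>0\) and \(z(t)>0\) on an interval. The complementary-slackness condition \(z(t)[C_z-\mu(t)]=0\) therefore forces
\[
	C_z(I(t),u(t),z(t);x(t))=\mu(t)
\]
for every \(t\) in the arc. This is an identity between two functions of time on the interval, not only a pointwise relation at a single instant.

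The next step is to differentiate this identity in \(t\). For this we need \(t\mapsto C_z(I(t),u(t),z(t);x(t))\) to be differentiable. Differentiability of \(\mu\) is part of the maximum principle, since \(\mu\) is (minus) an absolutely continuous costate. Because the two functions coincide on the arc, the time derivative of \(C_z\) along the path exists wherever \(\dot\mu\) does, and the two derivatives are equal. This gives
\[
	\frac{d}{dt}C_z(I(t),u(t),z(t);x(t))=\dot\mu(t),
\]
and no separate regularity of the controls is needed.

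Finally we substitute the shadow equation \(\dot\mu=r\mu+\sigma^L Y^L(x)-C_u(I,u,z;x)\) and replace \(\mu\) by \(C_z\) in the term \(r\mu\). This yields the stated Euler equation.

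The only delicate point is the differentiation step: the identity must hold on an interval, not just at isolated points, and equalities should be read almost everywhere if the costate is merely absolutely continuous. I would handle this by noting that "interior arc" means an open time interval on which \(u>0\) and \(z>0\). On such an interval the KKT condition holds identically, and the equality of derivatives holds wherever \(\dot\mu\) exists, in particular almost everywhere.
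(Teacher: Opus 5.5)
Your proposal is correct and matches the paper's proof: the KKT condition gives \(C_z=\mu\) on the arc, and differentiating this identity and substituting the shadow equation for \(\mu\) (with \(r\mu\) replaced by \(rC_z\)) yields the Euler equation. Your regularity remarks go beyond the paper, which differentiates without comment: the identity holding on an interval makes \(C_z\) along the path inherit the differentiability of \(\mu\), and equalities are read almost everywhere if the costate is only absolutely continuous.
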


\begin{proof}
On such an arc, Lemma~\ref{lem:interior-optimality-conditions} gives
\(C_z=\mu\).  Differentiating this identity along the path and substituting the
shadow equation for \(\mu\) gives the result.
\end{proof}

The state \(x\) affects non-labour saving, labour income, programme costs, and the
continuation constraints.  Investment therefore changes not only the future
automation state, but also the future benefit and cost of remaining reliant on
worker finance.  The zero-reliance accounting is given by
Lemma \ref{lem:breakaway-correspondence}; as emphasised in
Remark \ref{rem:breakaway-zero-reliance}, it is an accounting correspondence rather
than an equivalence between breakaway and zero reliance.

\subsection{Optimal Reliance Dynamics}
\label{subsec:optimal-reliance-dynamics}

The central implication of the optimality conditions is that the value of reducing
reliance on worker finance is inherently forward-looking.  The next proposition
collects the forward representation of that value and the threshold rule governing
whether reliance declines.

\begin{proposition}
\label{prop:forward-looking-reliance}
Suppose the conditions of Lemma~\ref{lem:interior-optimality-conditions} hold.
For the forward representation, suppose additionally that, from time \(t\) onward,
the continuation path remains interior, so that \(u(s)>0\) for all \(s\geq t\),
and that
\[
	\lim_{s\to\infty}e^{-r(s-t)}\mu(s)=0.
\]
Then
\[
	\mu(t)
	=
	\int_t^\infty
	e^{-r(s-t)}
	\left[
		C_u(I(s),u(s),z(s);x(s))
		-
		\sigma^L Y^L(x(s))
	\right]ds.
\]
Independently of these additional interiority and terminal conditions, the KKT conditions, together with
\(C_z(I,u,0;x)=0\) and \(C_z>0\) for \(z>0\), imply that reliance is constant
when
\(\mu(t)\leq0\),
and reliance is declining when
\(\mu(t)>0\).
On an arc of declining reliance, the reliance reduction rate is determined by
\(C_z(I,u,z;x)=\mu(t)\).
\end{proposition}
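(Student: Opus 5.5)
The proof has two parts. The first integrates the shadow equation of Lemma~\ref{lem:interior-optimality-conditions} forward. The second reads the sign rule off the KKT conditions for \(z\).

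\textbf{Forward representation.} Fix \(t\) and assume interiority, \(u(s)>0\) for all \(s\geq t\). Then the shadow equation
\[
\dot\mu(s)=r\mu(s)+\sigma^L Y^L(x(s))-C_u(I(s),u(s),z(s);x(s))
\]
holds on all of \([t,\infty)\). Multiply by the integrating factor \(e^{-r(s-t)}\), which turns the equation into
\[
\frac{d}{ds}\bigl[e^{-r(s-t)}\mu(s)\bigr]=e^{-r(s-t)}\bigl[\sigma^L Y^L(x(s))-C_u(\cdot)\bigr].
\]
Integrate from \(t\) to \(T\) and let \(T\to\infty\). The transversality assumption \(e^{-r(T-t)}\mu(T)\to0\) removes the boundary term at infinity. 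What remains is
\[
-\mu(t)=\int_t^\infty e^{-r(s-t)}\bigl[\sigma^L Y^L-C_u\bigr]ds,
\]
which is the stated formula.

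The main technical point is that this limit must exist. Convergence of the improper integral follows from the identity itself. The integral over \([t,T]\) equals \(e^{-r(T-t)}\mu(T)-\mu(t)\), and this converges by transversality. So the integral is well defined as an improper Riemann limit. I will remark that absolute integrability is not needed. I will also note that \(\mu\) is absolutely continuous, as a costate, so the fundamental theorem of calculus applies on each \([t,T]\).

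\textbf{Sign rule.} This part uses only the KKT conditions from Lemma~\ref{lem:interior-optimality-conditions}:
\[
z\geq0,\qquad C_z-\mu\geq0,\qquad z(C_z-\mu)=0.
\]
Suppose \(\mu(t)\leq0\) and, for contradiction, \(z(t)>0\). Complementary slackness gives \(C_z=\mu\leq0\). But the standing assumption is \(C_z>0\) for \(z>0\), a contradiction. Hence \(z(t)=0\), so \(\dot u=0\) and reliance is constant.

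Suppose instead \(\mu(t)>0\) and \(z(t)=0\). Then \(C_z(I,u,0;x)=0<\mu\), which violates \(C_z-\mu\geq0\). Hence \(z(t)>0\), so \(\dot u<0\) and reliance is declining. On such an arc complementary slackness gives \(C_z(I,u,z;x)=\mu(t)\). Because \(C_{zz}>0\), this equation pins down \(z\) uniquely.

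This argument is pointwise and uses neither interiority on the future path nor transversality, which matches the "independently" clause of the statement. It does rely on \(u(t)>0\), since the conditions of Lemma~\ref{lem:interior-optimality-conditions} apply there. When \(u=0\), feasibility already forces \(z=0\), and the rule applies only to interior states.
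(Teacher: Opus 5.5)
Your proof is correct and follows essentially the same argument as the paper: you integrate the shadow equation forward with the factor $e^{-r(s-t)}$, let the terminal condition remove the boundary term, and read the threshold rule pointwise off the KKT conditions for $z$. Your additional remarks are sound refinements of points the paper leaves implicit: that the improper integral converges by the identity itself, that $z$ is unique because $C_{zz}>0$, and that the sign rule applies only at states with $u(t)>0$.
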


\begin{proof}
Use the shadow equation in Lemma~\ref{lem:interior-optimality-conditions}.  Multiplying
it by the current-value integrating factor and integrating forward from \(t\) to a
finite horizon gives the present value of the net marginal cost of reliance plus a
terminal term.  The stated terminal condition makes that term vanish as the horizon
tends to infinity, yielding the forward representation.  Independently of the
additional interiority and terminal conditions, the threshold rule follows from the
KKT conditions in Lemma~\ref{lem:interior-optimality-conditions} together with
\(C_z(I,u,0;x)=0\) and \(C_z>0\) for \(z>0\).
\end{proof}

$\mu(t)$ is the present value of the future net marginal cost of reliance on
worker finance.  The term \(C_u\) is the marginal external-finance wedge associated
with keeping worker finance inside the accumulation circuit.  The term
\(\sigma^L Y^L\) is the marginal financial benefit of retaining worker saving as a
source of investment surplus.  The sign of \(\mu(t)\) determines whether reliance is
constant or declining, while its magnitude determines the reliance reduction rate
through the marginal reliance reduction cost.  Optimal reliance is therefore governed
by the discounted future net marginal cost of worker finance, rather than by the
current-flow comparison alone.

Let
\(g(t)=C_u(I(t),u(t),z(t);x(t))-\sigma^L Y^L(x(t))\).
Then the shadow equation in Lemma~\ref{lem:interior-optimality-conditions} can be
written as
\(\dot{\mu}(t)=r\mu(t)-g(t)\).

The forward-looking nature of the financing decision produces a sharper
implication: optimal reliance can begin to decline before worker finance becomes
unattractive in current-flow terms.

\begin{corollary}
\label{cor:anticipatory-reliance-reduction}
Under the conditions of Lemma~\ref{lem:interior-optimality-conditions}, suppose \(t_0\) is a zero of \(\mu\) on an interior continuation arc, so that
\(\mu(t_0)=0\).  Then \(t_0\) is a regular onset of declining reliance,
\(\dot\mu(t_0)>0\), if and only if
\[
	C_u(I(t_0),u(t_0),z(t_0);x(t_0))
	<
	\sigma^L Y^L(x(t_0)).
\]
Thus a regular onset of declining reliance occurs while worker saving still
provides a positive current net financing benefit.
\end{corollary}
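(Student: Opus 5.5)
The plan is to evaluate the shadow equation of Lemma~\ref{lem:interior-optimality-conditions} at the zero \(t_0\) of \(\mu\). On the interior continuation arc the constraint \(u\geq0\) is not binding, so the lemma applies at \(t_0\) and gives
\[
	\dot\mu(t_0)=r\mu(t_0)+\sigma^L Y^L(x(t_0))-C_u(I(t_0),u(t_0),z(t_0);x(t_0)).
\]
Since \(\mu(t_0)=0\), the discount term drops out. Equivalently, in the notation \(g(t)=C_u-\sigma^L Y^L\), this reads \(\dot\mu(t_0)=-g(t_0)\). Hence \(\dot\mu(t_0)>0\) holds if and only if \(C_u<\sigma^L Y^L\) at \(t_0\). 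This is the stated equivalence, taking ``regular onset'' to mean, by definition, a zero of \(\mu\) at which \(\dot\mu(t_0)>0\).

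The second step links this to reliance dynamics via Proposition~\ref{prop:forward-looking-reliance}. When \(\dot\mu(t_0)>0\) and \(\mu(t_0)=0\), continuity of \(\mu\) gives \(\mu\leq 0\) on a left neighbourhood of \(t_0\) and \(\mu>0\) on a right neighbourhood. The threshold rule then says that reliance is constant just before \(t_0\) and declining just after, with \(z\) determined by \(C_z=\mu\). This is what justifies calling \(t_0\) an \emph{onset} of declining reliance.

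The interpretive claim is then immediate. At the onset, the current net financing benefit of worker finance, \(\sigma^L Y^L(x(t_0))-C_u\), is strictly positive. So reliance starts to fall while worker saving still yields a positive current net benefit.

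For completeness, I would also note the anticipatory reading. If the forward representation of Proposition~\ref{prop:forward-looking-reliance} holds, then \(\mu(t_0)=0\) means the discounted future net cost \(g\) integrates to zero. Since \(g(t_0)<0\), \(g\) must become positive later. The current benefit is therefore outweighed by anticipated future costs, for example as \(Y^L\) shrinks along the paths of Proposition~\ref{prop:labour-income-collapse-persistence}. This reading is not needed for the equivalence itself.

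The main obstacle is regularity. The argument requires \(\mu\) to be differentiable at \(t_0\), with the costate equation holding pointwise there rather than merely almost everywhere. It also requires \(C_u\), \(Y^L\), and the controls to be evaluated consistently at \(t_0\). I would handle this by assuming the controls are continuous on the arc, so that \(\mu\) is \(C^1\) there. If that fails, I would interpret \(\dot\mu(t_0)\) as the right derivative, using right-continuity of the controls.
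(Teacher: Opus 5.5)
Your argument is correct and is essentially the paper's proof: evaluating the shadow equation of Lemma~\ref{lem:interior-optimality-conditions} at \(\mu(t_0)=0\) gives \(\dot\mu(t_0)=-g(t_0)=\sigma^L Y^L(x(t_0))-C_u\), and the equivalence and the interpretive claim follow at once. Your further remarks on the threshold rule, the forward representation, and pointwise validity of the costate equation go beyond what the paper writes but are consistent with it; note that the sign change of \(\mu\) across \(t_0\) follows from \(\dot\mu(t_0)>0\), not from continuity alone.
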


\begin{proof}
At \(\mu(t_0)=0\), the shadow equation gives
\(\dot\mu(t_0)=-g(t_0)\).
Hence \(\dot\mu(t_0)>0\) if and only if \(g(t_0)<0\), which is equivalent to the
stated inequality.
\end{proof}

Optimal reliance can therefore begin to decline in anticipation of future
labour-income erosion, before worker saving has lost its current financial value.

\begin{corollary}
\label{cor:at-most-one-regular-onset}
Under the conditions of Lemma~\ref{lem:interior-optimality-conditions}, suppose \(g(t)\) changes sign once along the relevant path, from negative to
positive.  Then declining reliance has at most one regular onset.
\end{corollary}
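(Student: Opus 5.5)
Our approach runs through Corollary~\ref{cor:anticipatory-reliance-reduction}, which says that a regular onset is a zero \(t_0\) of \(\mu\) with \(\dot\mu(t_0)>0\), equivalently \(g(t_0)<0\). We will show that two distinct regular onsets \(t_0<t_1\) would force \(\mu\) to cross zero downward somewhere between them. Such a downward crossing requires \(g\geq0\) there. Since \(g\) can become nonnegative only after its single sign change, this leads to a contradiction.

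Suppose \(g\) is negative on \(\{t<t^*\}\) and positive on \(\{t>t^*\}\), with the sign change at \(t^*\). By Corollary~\ref{cor:anticipatory-reliance-reduction}, every regular onset satisfies \(g(t_0)<0\), so every regular onset lies strictly before \(t^*\).

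Now suppose there are two regular onsets \(t_0<t_1\), both before \(t^*\), and use the shadow equation \(\dot\mu=r\mu-g\) from Lemma~\ref{lem:interior-optimality-conditions}. On \([t_0,t^*)\) we have \(-g>0\), so the linear equation gives \(\frac{d}{dt}\bigl(e^{-r(t-t_0)}\mu(t)\bigr)=-e^{-r(t-t_0)}g(t)>0\). Integrating from \(t_0\), where \(\mu(t_0)=0\), yields \(e^{-r(t-t_0)}\mu(t)>0\), and hence \(\mu(t)>0\) for every \(t\in(t_0,t^*)\). In particular \(\mu(t_1)>0\), which contradicts \(\mu(t_1)=0\). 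Therefore there is at most one regular onset.

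We expect the main obstacle to be bookkeeping rather than substance. The shadow equation holds only on interior arcs (\(u>0\)), so the integration from \(t_0\) to \(t_1\) must stay on a single interior continuation arc. We will read ``along the relevant path'' as this interior arc, on which Lemma~\ref{lem:interior-optimality-conditions} applies throughout. If \(u\) hit zero between \(t_0\) and \(t_1\), it could not become positive again, because \(\dot u=-z\leq0\); so no later onset on an interior arc could occur in that case either. We will also assume that \(g\) is measurable and integrable, so that the integrating-factor argument is valid with strict inequality. Strictness follows because \(g<0\) on a set of positive measure just after \(t_0\), by continuity at \(t_0\) or simply because \(g<0\) on the whole interval \([t_0,t^*)\).
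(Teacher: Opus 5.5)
Your proof is correct and follows essentially the same route as the paper. In both, regular onsets require \(g<0\) and so lie before the sign change, and \(\mu\) cannot return to zero between two such onsets while \(g<0\). The paper argues this no-return step through the sign of \(\dot\mu=-g\) at an intervening downward crossing. Your integrating-factor identity \(e^{-r(t-t_0)}\mu(t)=-\int_{t_0}^{t}e^{-r(s-t_0)}g(s)\,ds>0\) establishes it directly and slightly more carefully, since it needs only absolute continuity of \(\mu\) and avoids reasoning about the derivative at the return point.
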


\begin{proof}
At any zero of \(\mu\), the shadow equation gives \(\dot\mu=-g\).  Hence a regular
upward crossing requires \(g<0\), while a downward crossing requires \(g>0\).  Two
regular upward crossings would require a downward crossing between them, since
\(\mu\) would have to return to a non-positive value before becoming positive again.
With a single sign change of \(g\) from negative to positive, any regular upward
crossing must occur before the sign change and any downward crossing after it.
Therefore there can be at most one regular onset of declining reliance.
\end{proof}

This is the sense in which reliance can begin to decline before the current flow
comparison turns against worker finance.  The result concerns regular onsets only;
it does not impose a stronger single-switch structure on the entire reliance path.

We thus link the control problem directly to the labour-income analysis.  Along
full-automation paths satisfying the collapse condition in
Proposition~\ref{prop:labour-income-collapse-persistence},
\(Y^L(x(t))\to0\).
Then the financing benefit of worker saving disappears:
\(\sigma^L Y^L(x(t))\to0\).
If the marginal financing cost of reliance on worker finance does not vanish at the
same rate, the future net marginal cost of reliance becomes positive.  The optimal
path then moves toward lower \(u\).  Labour-income collapse therefore reduces
optimal reliance on worker finance by weakening the benefit side of the reliance
trade-off, and it can do so in advance through the forward-looking shadow value
\(\mu\).

Proposition~\ref{prop:artificial-labour-accumulation} shows how accumulation of
artificial labour can affect the two sides of the reliance trade-off differently.
At states with \(B_\Lambda\geq0\), a larger artificial-labour stock moves the
economy toward breakaway.  Holding the external-finance wedge fixed, if
\(\rho>1-\alpha\) the same increase lowers \(Y^L\) and therefore raises \(g\); if
\(\rho<1-\alpha\), it raises \(Y^L\) and lowers \(g\).  These are ceteris-paribus
effects through the labour-income channel; the effect on the shadow value \(\mu\)
also depends on the induced future path and on any state dependence of \(C_u\).

\begin{proposition}
\label{prop:finite-time-zero-reliance}
Suppose the optimal path satisfies the conditions of Lemma~\ref{lem:interior-optimality-conditions} while \(u(t)>0\).  Suppose that, along such an arc, there exist \(T\geq0\) and
\(\varepsilon>0\) such that
\[
	C_u(I(t),u(t),z(t);x(t))
	-
	\sigma^L Y^L(x(t))
	\geq
	\varepsilon
	\qquad
	\text{for all }t\geq T.
\]
Suppose also that there exists a continuous, strictly increasing function
\(\kappa\), with \(\kappa(0)=0\), such that
\[
	C_z(I(t),u(t),z;x(t))
	\leq
	\kappa(z)
\]
for all \(t\geq T\) on the arc and all relevant \(z\geq0\).  Suppose, in addition,
that the terminal condition of Proposition~\ref{prop:forward-looking-reliance}
holds on the arc.  Then, if reliance has not already reached zero by \(T\), it
reaches zero in finite time.
\end{proposition}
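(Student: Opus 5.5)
The argument is by contradiction, using the forward representation of the shadow value. Suppose reliance does not reach zero in finite time, so \(u(t)>0\) for all \(t\geq T\) and the whole continuation from \(T\) onward is an interior arc. Then Proposition~\ref{prop:forward-looking-reliance} applies at every \(t\geq T\): the interiority hypothesis holds by assumption and the terminal condition is assumed. It gives
\[
\mu(t)=\int_t^\infty e^{-r(s-t)}\bigl[C_u(I(s),u(s),z(s);x(s))-\sigma^L Y^L(x(s))\bigr]ds\geq\int_t^\infty e^{-r(s-t)}\varepsilon\,ds=\frac{\varepsilon}{r}>0 .
\]
This uses the uniform gap \(C_u-\sigma^LY^L\geq\varepsilon\) for \(s\geq t\geq T\). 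So the shadow value of reducing reliance is bounded below by a positive constant along the whole arc beyond \(T\).

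Next I convert this lower bound on \(\mu\) into a lower bound on the reduction rate \(z\). Since \(\mu(t)>0\), the threshold rule in Proposition~\ref{prop:forward-looking-reliance} gives \(z(t)>0\) and \(C_z(I,u,z;x)=\mu(t)\). The bound on \(C_z\) then yields \(\kappa(z(t))\geq C_z=\mu(t)\geq\varepsilon/r\). Because \(\kappa\) is continuous, strictly increasing and satisfies \(\kappa(0)=0\), this forces \(z(t)\geq\underline z\) for some \(\underline z>0\) with \(\kappa(\underline z)=\varepsilon/r\). If \(\varepsilon/r\) exceeds the range of \(\kappa\), the inequality \(\kappa(z)\geq\varepsilon/r\) is impossible, which contradicts interiority immediately. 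Otherwise, I would take \(\underline z=\inf\{z:\kappa(z)\geq\varepsilon/r\}\), which is positive by continuity and \(\kappa(0)=0\).

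Integrating \(\dot u=-z\leq-\underline z\) from \(T\) gives \(u(t)\leq u(T)-\underline z(t-T)\leq 1-\underline z(t-T)\). The right-hand side becomes negative for \(t>T+1/\underline z\), contradicting \(u(t)>0\). Hence \(u\) hits zero no later than \(T+u(T)/\underline z\).

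The main obstacle is the first step: justifying the forward representation at every \(t\geq T\) under a hypothesis that is itself being contradicted. I would state this carefully. The assumption "\(u>0\) for all \(s\geq t\)" is exactly the hypothesis being refuted, so the contradiction structure is legitimate. One should also check that the bound \(C_z\leq\kappa(z)\) is applied at the actual optimal \(z(t)\), which is among the "relevant \(z\)". A minor point is the case \(r>0\), which is needed for \(\varepsilon/r\) to be finite and positive; it is implicit in the discounting.
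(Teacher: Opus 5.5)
Your proposal is correct and follows the paper's proof essentially step for step: you argue by contradiction, use the forward representation to get $\mu\geq\varepsilon/r$, apply the KKT threshold rule to obtain $C_z=\mu$, turn the bound $C_z\leq\kappa(z)$ into a uniform lower bound $z\geq\underline z>0$, and conclude that $\dot u\leq-\underline z$ contradicts $u>0$ forever. Your separate handling of the case where $\varepsilon/r$ exceeds the range of $\kappa$, together with defining $\underline z$ as an infimum, slightly tightens the paper's step, which simply writes $\kappa^{-1}(\varepsilon/r)$ without checking that this value is attained.
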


\begin{proof}[Proof sketch]
\renewcommand{\qedsymbol}{}
Suppose, to the contrary, that reliance remains positive forever after \(T\). The uniform net-cost condition, together with the forward representation, forces a positive shadow value for reducing reliance. The KKT condition and the bound on the marginal reliance-reduction cost then force reliance to keep falling at a non-vanishing rate. Such a path cannot keep reliance positive forever. A full proof is given in Appendix \ref{app:proof-finite-time-zero-reliance}.
\end{proof}

The onset result and the finite-time result concern distinct margins.
Corollary \ref{cor:anticipatory-reliance-reduction} shows that reliance may begin
to decline before the current flow comparison turns against worker finance.  The
stronger conditions in Proposition \ref{prop:finite-time-zero-reliance} ensure that,
once the net marginal cost of reliance is uniformly positive, the force toward lower
reliance does not vanish before reliance reaches zero.

Without these additional assumptions, reliance may decline gradually and need not
reach zero in finite time.  This distinction has a natural financing interpretation.
If the marginal cost of worker finance vanishes as reliance approaches zero, for
example in the separable benchmark
\(C^R(u)=\gamma u^2/2\),
then \(C_u\to0\) as \(u\to0\), and the force toward lower reliance may also
vanish.  If instead each remaining unit of worker finance carries a positive
marginal financing cost, for example
\(C^R(u)=cu+\gamma u^2/2\) with \(c>0\),
then the cost of maintaining reliance remains positive as long as \(u>0\),
providing the non-vanishing marginal-cost force required by the finite-time result
under the uniform conditions above.

The economic implication is direct.  As automation erodes labour income, the
financing benefit generated by worker saving falls relative to the external-finance
wedge.  Because the comparison is forward-looking, reliance may begin to decline
before worker finance has ceased to be worthwhile in the current flow.  The
breakaway surplus \(B(x)\) answers the separate current-flow self-financing
question; Definition~\ref{def:optimal-reliance-problem} asks whether worker finance
remains optimally valuable once that financing restriction is removed.

\section{Conclusion}
\label{sec:conclusion}

Human labour can remain economically important through several distinct channels
as automation advances. It can contribute directly to production, generate saving
that finances accumulation, and remain valuable within the human-filled activities
that survive. This paper studies the economic conditions governing these margins
and, in particular, the point at which continued automation becomes financially
independent of saving generated by labour income. We call this transition
automation breakaway.

The central object is the breakaway surplus,
\(B(x)=\Sigma^E(x)-\mathcal{T}^{\mathrm{full}}(x)\),
which compares saving generated by capital and artificial labour with the resources
required to maintain the automation system and continue advancing the frontier.
Viability asks whether the dynamics preserve this financing capacity. The remaining
results then characterise how this accumulation margin relates to labour income,
surviving human work, and optimal reliance on worker finance.

Four economic separations emerge. Technological independence is compatible with
continued use of labour-income saving. Accumulation independence is compatible
with either rising or falling labour income. Aggregate labour-income collapse is
compatible with positive, and potentially increasing, average labour income in the
human-filled jobs that remain. Worker finance can retain positive optimal value
when its use is optional. The formal results identify the economic conditions
supporting each of these outcomes.

Two mechanisms are especially important for breakaway. Productivity changes the
amount of income generated by the economy, while income composition changes the
share accruing to the accumulable bloc. Viable breakaway is governed by a marginal
investment condition: productivity contributes through its effect on output, while
income composition provides an independent channel through which frontier
advancement can generate the saving required for its own continuation.
Artificial-labour accumulation reinforces this result by expanding the non-labour
saving base while its effect on labour income depends on substitutability between
human and artificial labour. The labour-market results add a further margin:
aggregate human work can contract while the economic value of the human-filled jobs
that remain stays high.

The framework isolates these mechanisms while providing natural interfaces with
richer models. Endogenous saving and asset prices introduce the accumulation
feedbacks studied by \citet{caballero2026speculative}. Under the maintained
income-source saving convention, saving rates by income source determine the
breakaway surplus, while the allocation of ownership claims determines who receives
the corresponding income. \citet{lagarda2026owns} provides a natural ownership
interface when saving behaviour depends on who receives automation income.
Financial frictions provide another interface, as in \citet{bond2023income}
(revised 2026). Preferences and distributional objectives provide the additional
structure required for welfare analysis.

The broader implication is that the transition toward a post-labour economy has
several economically distinct dimensions. Automation changes the technological
role of human labour, the sources that finance accumulation, the evolution of
labour income, the value of surviving human work, and optimal reliance on worker
finance. Automation breakaway identifies the financing dimension of this
transition. An automated production system and a self-financing automation system
describe different economic states, governed by different conditions.

\clearpage
\appendix
\section{Proofs}
\label{app:proofs}

The main text retains proofs that reveal the economic or mathematical mechanism of a result. Proofs consisting primarily of algebra, standard convergence arguments, or technical verification are collected here; where the mechanism is informative but the verification is longer, the main text gives a proof sketch.

\subsection{Labour-Income Collapse and Persistence}
\label{app:proof-labour-income-collapse}

\begin{proof}[Proof of Proposition~\ref{prop:labour-income-collapse-persistence}]
By Equation~\eqref{eq:labour-income-asymptotic} and the definition of \(H(t)\),
\[
\frac{Y^L(x(t))}
{H(t)h(K(t),L(t),\Lambda(t))}
\to1.
\]
Hence, for all sufficiently large \(t\),
\[
\frac{1}{2}H(t)h(K(t),L(t),\Lambda(t))
\leq
Y^L(x(t))
\leq
\frac{3}{2}H(t)h(K(t),L(t),\Lambda(t)).
\]
Also,
\begin{equation}
\label{eq:human-labour-margin-growth}
H(t)h(K(t),L(t),\Lambda(t))
=
H(0)h(K(0),L(0),\Lambda(0))
\exp\left\{
-\int_0^t
\left[
\rho_\beta(s)
-\theta\rho_a(s)
-\rho_h(s)
\right]ds
\right\}.
\end{equation}
The two equivalences follow from these bounds and the exponential representation.
\end{proof}

\subsection{Task Automation and Human-Filled Jobs}
\label{app:proof-task-automation-jobs}

\begin{proof}[Proof of Proposition~\ref{prop:task-automation-jobs}]
	For each \(\tau\), \(1-\beta^\tau\to0\) as \(\beta\to1\), and
	\(0\leq1-\beta^\tau\leq1\).  Dominated convergence therefore gives
	\(J^H(\beta)\to0\).
	If \(\bar\tau<\infty\), then
	\((1-\beta^\tau)/(1-\beta)\to\tau\)
	for each \(\tau\).  Moreover, for \(\beta\in(0,1)\),
	\[
		0
		\leq
		\frac{1-\beta^\tau}{1-\beta}
		=
		1+\beta+\cdots+\beta^{\tau-1}
		\leq
		\tau .
	\]
	Dominated convergence gives, as $\b\to 1$
	\[
		\frac{J^H(\beta)}{1-\beta}
		=
		\sum_{\tau\geq1}p_\tau
		\frac{1-\beta^\tau}{1-\beta}
		\to
		\sum_{\tau\geq1}\tau p_\tau
		=
		\bar\tau .
	\]
\end{proof}

\subsection{Artificial-Labour Accumulation}
\label{app:proof-artificial-labour-accumulation}

\begin{proof}[Proof of Proposition~\ref{prop:artificial-labour-accumulation}]
Since
\[
\frac{\partial\log Y}{\partial\log\Lambda}
=(1-\alpha)q,
\qquad
\frac{\partial\log q}{\partial\log\Lambda}
=\rho(1-q),
\]
differentiating
\(\Sigma^E=\left[\sigma^K\alpha+\sigma^\Lambda(1-\alpha)q\right]Y\)
gives Equation~\eqref{eq:SigmaE-lambda}.  Similarly, from
\[
Y^L
=
(1-\alpha)Y
\frac{(1-\beta)a^\theta L^\rho}{M^\rho}
\]
and
\(
\partial\log M/\partial\log\Lambda=q
\),
\[
\frac{\partial\log Y^L}{\partial\log\Lambda}
=
(1-\alpha)q-\rho q
=
(1-\alpha-\rho)q,
\]
which gives Equation~\eqref{eq:YL-lambda}.  Equation~\eqref{eq:B-lambda}
follows by differentiating the linear-benchmark breakaway surplus.
\end{proof}

\subsection{Finite-Time Zero Reliance}
\label{app:proof-finite-time-zero-reliance}

\begin{proof}[Proof of Proposition~\ref{prop:finite-time-zero-reliance}]
Suppose by contradiction that reliance never reaches zero after \(T\).  Then the
path remains on an interior arc with \(u(t)>0\) for all \(t\geq T\), so
Proposition~\ref{prop:forward-looking-reliance} applies and
\(\mu(t)\geq\varepsilon/r\) for all \(t\geq T\).
Hence \(\mu(t)>0\), so reliance is declining for all \(t\geq T\), with
\[C_z(I(t),u(t),z(t);x(t))=\mu(t).\]
The bound on \(C_z\) therefore implies
\(\kappa(z(t))\geq\varepsilon/r\),
and hence
\[
	z(t)\geq \kappa^{-1}\left(\frac{\varepsilon}{r}\right)
	=: \underline z>0
\]
as long as \(u(t)>0\).  But then
\(\dot u(t)=-z(t)\leq-\underline z\)
for all \(t\geq T\) on the interior arc, contradicting \(u(t)>0\) forever.
\end{proof}
\bibliographystyle{plainnat}
\bibliography{references}

\end{document}